\newtheorem{thm}{Theorem}
\newtheorem{prop}{Proposition}
\newtheorem{cor}{Corollary}
\begin{document}
\title{An ensemble perspective on multi-layer networks}
\reference{To appear in: A. Garas (Ed.) {\it Interconnected Multilayer Networks}, Springer 2015 }
\author{Nicolas Wider, Antonios Garas, Ingo Scholtes, Frank Schweitzer}
\address{Chair of Systems Design, ETH Zurich, Switzerland\\
  \url{www.sg.ethz.ch}}
\www{\url{http://www.sg.ethz.ch}}
\makeframing
\maketitle

\begin{abstract}
We study properties of multi-layered, interconnected networks from an ensemble perspective, i.e. we analyze ensembles of multi-layer networks that share similar aggregate characteristics.
Using a diffusive process that evolves on a multi-layer network, we analyze how the speed of diffusion depends on the aggregate characteristics of both intra- and inter-layer connectivity.
Through a block-matrix model representing the distinct layers, we construct transition matrices of random walkers on multi-layer networks, and estimate expected properties of multi-layer networks using a mean-field approach.
In addition, we quantify and explore conditions on the link topology that allow to estimate the ensemble average by only considering aggregate statistics of the layers.
Our approach can be used when only partial information is available, like it is usually the case for real-world multi-layer complex systems.
\end{abstract}

\section{Introduction}\label{ch2.5-intro}

Networks are often used to describe interactions among the elements of a complex system.
But until recently, the standard assumption in the literature was that networks are isolated entities and do not interact with other networks.
Today we understand that this assumption is a rough simplification, since real networks usually have complex patterns of interaction with other networks.
In order to study more realistic systems, network theory extended its perspective to account for these network to network interactions, and to investigate their influence on various processes of interest that may use the network topology as substrate~\cite{ch2.5-Parshani2010,ch2.5-Gao2011,ch2.5-Gomez2013,ch2.5-Garas2014}.

Networks consisting of multiple networks and the connections between them are called  \textit{interconnected} or \textit{multi-layer} networks~\cite{ch2.5-boccaletti2014}.
In a multi-layered network each individual layer contains a network that is different from the networks contained in other layers, and the layer interconnectivity refers to the fact that nodes in different layers can be connected to each other.
Nevertheless, it is often possible to extend and apply methods developed for single-layer (isolated) networks to multi-layer networks, assuming that all layers and the connections between them are known precisely.
Unfortunately, when creating networks using relational data on real-world systems we are often confronted with situations where we \emph{lack information} about the details of their multi-layer structure.
In such situations, ensemble-based approaches allow us to reason about the expected properties of such networks, provided that we have access to aggregate statistics which can be used to define a statistical ensemble.

For instance, there are situations in which we are able to precisely map the topology \emph{within} each layer individually, but we may not be able to obtain the detailed topology of connections \emph{across} different layers.
As an example, we may consider the topology of connections between users in different online social networks (OSNs).
Such a system can be represented as a multi-layer network, where each layer represents the network of connections between users \emph{within} one OSN.
In addition, cross-layer connections are due to users which are members of multiple OSNs at the same time, and which can thus drive the dissemination of information across OSNs.
Data on the network topology within particular OSNs are often readily available, however it is in general very difficult to identify accounts of the same user in different OSNs.

Contrary to the situation described above, we may also consider situations in which detailed information on the topology of cross-layer links is available, while the detailed topology of connections \emph{within} layers is not known.
For example, there may be a rather small number of static links \emph{across} layers, while the topology of links \emph{within} layers is too large and too dynamic to allow for a detailed mapping.
Again, in such a situation we may still have access to partial, aggregate information on the \emph{inter-layer} connectivity (such as the number of nodes or the density of links) which we can use in order to reason about a multi-layer complex system.

Both the above situations lead to multi-layer networks, and both require us to reason about a system with incomplete information.
This problem can be addressed from a macroscopic perspective using \emph{statistical ensembles}, and in this work we extend the ensemble perspective to multi-layer networks, where we have access to mere aggregate statistics either on links \emph{within} or \emph{across} layers.
Combining both detailed and aggregate information on the links in a multi-layer network, we first define a statistical ensemble, i.e. a probability space containing all network realizations that are consistent with available information.
Secondly, we assume a probability mass function which assigns a probability to each possible realization in the ensemble.
And finally, using either analytical or numerical techniques, we use the resulting probability space to reason about the \emph{expected properties} of a network given that it is drawn from the ensemble.

The rest of the manuscript is structured as follows.
In Sect.~\ref{methods} we present our methodological approach to model ensembles of multi-layer networks, we formally introduce the diffusion process that is assumed to run on the multi-layer network, and we introduce a method that allows to aggregate the statistics of links inside layers and across layers.
In Sect.~\ref{meanfield} we introduce a mean-field approach to approximate  ensemble averages, and we investigate under which conditions it can be used to argue about diffusion in multi-layer networks.
In particular we discuss three distinct cases according to different levels of information that we may have about the topology of links across the layers or inside the layers.

\section{Methods and definitions}\label{methods}

In our analysis we investigate a diffusion process that evolves on a static multi-layer network.
More precisely we focus on diffusion dynamics modeled by a random walk process.
In the following we provide a short summary that highlights the important properties of this process.
Note that these are facts already known from the theory of random walks on single layer networks~\cite{ch2.5-lovasz1993,ch2.5-Blanchard2011}, but later on we will extend this framework to multi-layer networks.

We assume a discrete time random walk process on a network $\mathbf{G}$ that consist of $n$ nodes.
Starting at an arbitrary node, at each step of the process the walker moves to an adjacent node, so that for a pair of nodes $i,j$ the probability for a walker to move from node $i$ to node $j$ is given by the corresponding entry $\mathbf{T}_{ij}$ of a \emph{transition matrix} $\mathbf{T}$.
Since we have $\sum_j P(i\rightarrow j)=1$, the transition matrix is row stochastic.

We further consider a vector $\pi^t \in \mathbf{R}^n$, whose entries $\pi_i^t$ indicate the probability of a random walker to visit node $i$ after $t$ steps of the process.
Here, we consider $\pi^0$ as a given initial distribution, whose entries $\pi_i^0$ give the probability that the random walker has started at node $i$.
The change of visitation probabilities $\pi^t \rightarrow  \pi^{t+1}$ can then be calculated based on the transition matrix as follows:
\begin{equation}
 \pi^{t+1} = \pi^t\mathbf{T}.
\end{equation}
Since this is an iterative process starting with $\pi_0$, the visitation probability vector after $t$ time steps can be calculated as $\pi^t = \pi^0\mathbf{T}^t$, and we can investigate the long-term behavior of the random walk process for $t\rightarrow\infty$.
For a visitation probability vector $\pi^*$ such that $\pi^*\mathbf{T}=v^*$, we can say that the process reaches a stationary distribution $\pi^*$, and if the transition matrix $\mathbf{T}$ is primitive, the Perron-Frobenius theorem guarantees that such a unique stationary distribution $\pi^*$ exist.

In order to assess the convergence time of a random walk process, we can study the \emph{total variation distance} between visitation probabilities $\pi^t$ after $t$ steps and the stationary distribution $\pi^{*}$.
For two distributions $\pi$ and $\pi'$, the total variation distance is defined according to Ref.~\cite{ch2.5-Rosenthal1995} as
\begin{equation}
 \Delta(\pi,\pi'):=\frac{1}{2}\sum_i |\pi_i-\pi'_i|\, ,
\end{equation}
where $\pi_i$ indicates the $i$-th entry of $\pi$.

As a proxy for diffusion speed, we can now investigate how long it takes until the total variation distance $\Delta(\pi^t, \pi^{*})$ falls below some given threshold value $\epsilon$ (for a small $\epsilon$).
In other words, we study how many steps $t(\epsilon)$ a random walker needs such that $\Delta(\pi^{t},\pi^{*})\leq\epsilon$ for $t \geq t(\epsilon)$

The eigenvalues $1 = \lambda_1 \geq |\lambda_2| \geq \ldots \geq |\lambda_n|$ of a row-stochastic matrix necessarily have absolute values that fall between zero and one, while the largest eigenvalue $\lambda_1$ is necessarily one.
The number of required time steps $t(\epsilon)$ (and thus the diffusion speed of the random walk process) can be estimated by means of the the second-largest eigenvalue $\lambda_2$ of $\mathbf{T}$,
\begin{equation}
 t(\epsilon)\sim\frac{-1}{\ln(|\lambda_2|)}\,.
 \label{eq:lambda2}
\end{equation}
For a detailed derivation see Ref.~\cite{ch2.5-Chung2005}.
Eq.~(\ref{eq:lambda2}) shows that a second-largest eigenvalue $\lambda_2$ close to one implies slow convergence, while $\lambda_2$ close to zero implies fast convergence.
Therefore in the following we use the second-largest eigenvalue of a transition matrix $\lambda_2(\mathbf{T})$ as a proxy to measure and quantify the convergence behavior on a network.

\subsection*{Multi-layer network}\label{multi}
The purpose of our study is to investigate diffusion processes on ensembles of networks with multiple interconnected layers.
Thus, in the following we briefly introduce the notion of multi-layer networks used in this manuscript.
Let us consider a multi-layer network denoted by $\mathbf{G}$ that consist of $L$ non-overlapping layers $G_1,\ldots,G_L$.
Each of these layers $G_l$ is a single-layer network $G_l=(V_l,E_l)$ where $V(G_l)$ and $E(G_l)$ denote the nodes and links of layer $l$ respectively.
We call the links $E(G_l)$ between nodes \emph{within} the layers $l$ \textit{intra}-links.
The multi-layer network $\mathbf{G}$ consist in total of $n$ nodes, where $n=\sum_{l=1}^L |V(G_l)|$.
In addition, we assume a set $E_I(\mathbf{G})$ of \emph{inter-layer} links which connect nodes \emph{across} layers, i.e.
for each edge $(u,v) \in E_I$ we have $u \in V(G_i)$ and $V(G_j)$ for $i \neq j$.
Inter-layer links induce a multipartite network with the independent sets $G_1,\ldots,G_L$.

In our study we consider undirected and unweighted networks, however some of our results may hold even for directed or weighted networks.
Furthermore, from the perspective of a random walk process, we assume that inter- and intra-layer links are indistinguishable, i.e. transitions are made purely randomly irrespective of the type of link.
As such, the multi-layer network can also be viewed as a huge single network consisting of subgraphs $G_1, \ldots, G_L$.

As mentioned above diffusion dynamics on networks can be studied analytically using transition matrices of random walkers~\cite{ch2.5-lovasz1993,ch2.5-Domenico2013}.
The multi-layer structure of a network can explicitly be incorporated in a random walk model by constructing a so-called \emph{supra}-transition matrix~\cite{ch2.5-Gomez2013,ch2.5-Ribalta2013} similar to the supra-adjacency matrix used in ~\cite{ch2.5-Domenico2013,ch2.5-Multilayer2013,ch2.5-DeDomenico2013}.
The supra-adjacency matrix of a multi-layer network $\mathbf{G}$ can be defined in a block-matrix structure as
\begin{equation}
  \mathbf{A} =
  \left(
    \begin{array}{c|c|c|c|c}
      \mathbf{A}_1 & \ldots & \mathbf{A}_{1t}& \ldots& \mathbf{A}_{sL} \\
      \hline
      \vdots & \ddots & \vdots& \ddots& \vdots \\
      \hline
      \mathbf{A}_{s1} & \ldots & \mathbf{A}_{st}&\ldots&  \mathbf{A}_{sL} \\
      \hline
      \vdots & \ddots & \vdots& \ddots& \vdots \\
      \hline
      \mathbf{A}_{L1} & \ldots & \mathbf{A}_{Lt}&\ldots&  \mathbf{A}_L
    \end{array}
   \right)\,.
   \label{eqn:NoN}
\end{equation}
On the diagonal we have the adjacency matrices $\mathbf{A}_1,\ldots,\mathbf{A}_L$ corresponding to the layers $G_1,\ldots,G_L$, thus entries of these block matrices  represent the \emph{intra-layer links} of the multi-layer network.
Off-diagonal matrices $\mathbf{A}_{ij}$ for $i,j\in\{1,\ldots,L\}$ with $i \neq j$ represent \emph{inter-layer} links that connect nodes in layer $G_i$ to nodes in layer $G_j$.
Since we consider undirected networks we have $\mathbf{A}_{ij}^\top=\mathbf{A}_{ji}$.

Based on a supra-adjacency matrix $\mathbf{A}$ we can easily define a \emph{supra-transition} matrix $\mathbf{T}$ of a random walker on a multi-layer network $\mathbf{G}$. In block-matrix form such a matrix can be written as:
\begin{equation}
  \mathbf{T} =
  \left(
    \begin{array}{c|c|c|c|c}
      \mathbf{T}_1 & \ldots & \mathbf{T}_{1t}& \ldots& \mathbf{T}_{sL} \\
      \hline
      \vdots & \ddots & \vdots& \ddots& \vdots \\
      \hline
      \mathbf{T}_{s1} & \ldots & \mathbf{T}_{st}&\ldots&  \mathbf{T}_{sl} \\
      \hline
      \vdots & \ddots & \vdots& \ddots& \vdots \\
      \hline
      \mathbf{T}_{L1} & \ldots & \mathbf{T}_{Lt}&\ldots&  \mathbf{T}_L
    \end{array}
   \right)\,.
   \label{eqn:non2}
\end{equation}
Here, each entry $T_{ij}$ is defined as:
\begin{equation}
 T_{ij}=\frac{a_{ij}}{\sum_{k=1}^n a_{kj}}\, ,
\end{equation}
where $a_{ij}$ are the corresponding entries of the supra-adjacency matrix $\mathbf{A}$.
Note that, due the presence of inter-layer links, block matrices $\mathbf{T}_{ij}$ are in general not equal to the row-normalized version of block matrices $\mathbf{A}_{ij}$.
The supra transition matrix defined above can be used to model a random walk process on a multi-layer network.

From an analytical perspective the supra-transition matrix can be treated in the same way as the transition matrix of a single layer as explained above.
In the case of undirected networks the eigenvalues of a transition matrix are related to the eigenvalues of the normalized Laplacian matrix.
In our case we study the second-largest eigenvalue of the supra-transition matrix and use it as a proxy for the efficiency of a network with respect to a diffusion process as pointed out above.

Using $\mathbf{T}$ we are able to model a diffusion process on a multi-layer network.
Since we especially want to emphasize the relevance of the inter-links, in the next section we introduce a transition matrix that only considers transitions across layers and not between individual nodes.
As we will see later, this aggregated transition matrix is useful to distinguish the influence of inter-layer and intra-layer links on the convergence behavior of a random walk process.

\subsection*{Multi-layer aggregation}\label{multiagg}
The supra-transition matrix $\mathbf{T}$ introduced previously contains transition probabilities for any pair of nodes in the multi-layer network.
In this sense $\mathbf{T}$ could also be the transition matrix of a large network, which is not divided in separate layers.
In order to understand the effects of a layered structure, in this section we focus explicitly on transitions across layers.
To do this we aggregate the statistics of inter-links and the intra-links of all single layers, and thus, we homogenize all individual nodes that belong to the same layer.
This way we reduce the supra-transition matrix $\mathbf{T}$ of dimension $n$ to an aggregated transition matrix $\mathfrak{T}$ of dimension $L$.
We call this process \emph{multi-layer aggregation} and the matrix $\mathfrak{T}$ the \emph{layer-aggregated} or just \emph{aggregated} transition matrix.
Later on we will provide a relation between the eigenvalues of $\mathbf{T}$ and $\mathfrak{T}$, which will allow us to decompose the spectrum of $\mathbf{T}$.
This is important since the convergence behavior of a random walk process depends on the second largest eigenvalues of $\mathbf{T}$.

Let us begin by discussing the construction process of the layer-aggregated transition matrix.
Our goal is to define transition probabilities across any two layers $G_s$  and $G_t$ by averaging the transitions between any two nodes of $G_s$ and $G_t$.
Under certain conditions which will be specified in the following, these average transition probabilities can be representative for all nodes of the different layers.

Let $\mathbf{G}$ be a multi-layer network that consists of $L$ layers $G_1,\ldots,G_L$.
The transition probability to go from node $v_i$ to any node $v_j$ in $\mathbf{G}$ is defined as
\begin{equation}
 P(v_i\rightarrow v_j)=\frac{\omega(v_i,v_j)}{\sum_{k}\omega(v_i,v_k)}
\end{equation}
where $\omega(v_i,v_j)$ is the weight of a link connecting $v_i$ with $v_j$.
This is a general formalism, but since we only consider unweighted networks we have $\omega(v_i,v_j)=1$ if and only if there is a link between the nodes $v_i$ and $v_j$.

For each node $v_i$ in layer $G_s$ we require that the transition probabilities $P(v_i\rightarrow *)$ to nodes in another layer $G_t$ fulfill the following equation
\begin{equation}
  \alpha_{ss}\sum_{v_j\in V(G_s)}P(v_i\rightarrow v_j)=\alpha_{st}\sum_{v_k\in V(G_t)}P(v_i\rightarrow v_k)\quad \forall v_i\in V(G_s)\, ,
 \label{eqn:cond}
\end{equation}
where $\alpha_{st}$ is a factor that only depends on the layers $G_s$ and $G_t$.
The factor $\alpha_{ss}$ is used to normalize the transitions, such that $\sum_t\alpha_{st}=1$ is satisfied.
In other words Eq.(\ref{eqn:cond}) implies that the probability for a random walker at node $i$ to stay inside layer $G_s$ is a multiple of the probability to switch to layer $G_t$.

We can see that $\alpha_{st}$ is independent of $i$, and therefore $\mathbf{T}_{st}=\alpha_{st}\mathbf{R}_{st}$ where $\mathbf{R}_{st}$ is a row stochastic matrix.
This means that $\mathbf{T}_{st}$ resembles a scaled transition matrix, and $\alpha_{st}$ represents the weighted fraction of all links starting in $G_s$ that end up in $G_t$.
Thus, we can define the aggregation of a supra-adjacency matrix satisfying Eq.(\ref{eqn:cond}) as
\begin{equation}
 \mathfrak{T} =
  \left(
    \begin{array}{c|c|c|c|c}
      \alpha_{11} & \ldots & \alpha_{1t}& \ldots& \alpha_{sL} \\
      \hline
      \vdots & \ddots & \vdots& \ddots& \vdots \\
      \hline
      \alpha_{s1} & \ldots & \alpha_{st}&\ldots&  \alpha_{sL} \\
      \hline
      \vdots & \ddots & \vdots& \ddots& \vdots \\
      \hline
      \alpha_{L1} & \ldots & \alpha_{Lt}&\ldots&  \alpha_{LL}
    \end{array}
   \right)\,.
\end{equation}

If a multi-layer network $\mathbf{G}$ satisfies Eq.(\ref{eqn:cond}) 
we can follow that the spectrum of the aggregated matrix $\mathfrak{T}=\{\alpha_{st}\}_{st}$ is
\begin{equation}
 Spec(\mathfrak{T})=\{1,\lambda_2,\ldots,\lambda_L\}\, ,
\end{equation}
and it holds that $\lambda_2,\ldots,\lambda_L\in Spec(\mathbf{T})$ (see Prop.~\ref{prop:rev} in the Appendix).

This relation implies that the aggregated matrix $\mathfrak{T}$ preserves $L$ eigenvalues of the supra-transition matrix $\mathbf{T}$, where $L$ is the amount of layers.
In other words, under the condition that Eq.(\ref{eqn:cond}) holds, we are able to make statements about the spectrum of the transition matrix $\mathbf{T}$ only using the layer-aggregated transition matrix $\mathfrak{T}$.

Similar to the Fiedler vector, i.e. the eigenvector corresponding to the second smallest eigenvalue of the Laplacian matrix, here we may use the eigenvector $v_2$ corresponding to the second largest eigenvalue $\lambda_2$ of the transition matrix $\mathbf{T}$.
The vector $v_2$ contains negative and positive entries and sums up to zero.
If all individual nodes that belong to the same layer correspond to entries of $v_2$ with the same sign, we consider the layers of $\mathbf{G}$ partitioned according to $v_2$, which is also called spectral partitioning or spectral bisection~\cite{ch2.5-Fiedler1973,ch2.5-Fiedler1975}.
In this case, according to Cor.~\ref{cor:eig} in the Appendix, it holds that $\lambda_2(T)=\lambda_2(\mathfrak{T})$.

We note that the multi-layer aggregation, performed according to a spectral partitioning, has similarities to spectral coarse-graining~\cite{ch2.5-Gfeller2007}.
The multi-layer aggregation presented here decreases the state space as well, but still preserves parts of the spectrum.

The spectral properties introduced in this section are important for our ensemble estimations that follows, since we characterize the diffusion process by its convergence efficiency measured through the second-largest eigenvalue $\lambda_2(\mathbf{T})$ of the supra-transition matrix.
However, as outlined before, if Eq.(\ref{eqn:cond}) holds then this eigenvalue is equal to the second-largest eigenvalue $\lambda_2(\mathfrak{T})$ of the aggregated transition matrix $\mathfrak{T}$.
Considering that for the construction of $\mathfrak{T}$ we only used aggregated statistics on the network and not the detailed topologies of the inter-links or any of the intra-links of all single layers, this already provides a hint how we can treat a system in the case of limited information.

\section{Mean-field approximation of ensemble properties}\label{meanfield}
With the layer aggregation introduced in the previous section, we are now able to deal with multi-layer network ensembles in case of limited information.
In our case, this information concerns knowledge either of the inter-link topology between layers or the intra-link topologies of all single layers.
For our purpose we define ensembles based on the inter-link densities and intra-link densities of all single layers,
more precisely, by using the amount of nodes, the amount of inter-links across any two layers, and the amount of intra-links of all single layers.
The number of nodes in individual layers are represented by the vector $\vec{n}=\{n_1,\ldots,n_L\}$ and the number of links between layers by a matrix $\mathbf{M}$ with entries $m_{st}$ where $s$ gives the source layer and $t$ the target layer.
Intra-layer links have both of their ends in the same layer and therefore we assume that the diagonal elements $m_{ss}$ are equal to the amount of desired intra-links multiplied by two.
We denote the ensemble defined by these two quantities $\mathcal{E}(\vec{n},\mathbf{M})$.

A single random realization of this ensemble satisfies the aggregated statistics given by $\mathbf{M}$ and $\vec{n}$.
We assume a random uniform distribution of links and therefore each realization of $\mathcal{E}(\vec{n},\mathbf{M})$ has the same probability.
However, instead of single realizations we are rather interested in the average values of all possible realizations.
For each multi-layer network realization $\mathbf{G}$ of $\mathcal{E}(\vec{n},\mathbf{M})$ we build the supra-transition matrix $\mathbf{T}$, which defines a random walk process that is different for every realization.
As discussed above, a proxy of the convergence quality of these random walk processes is given by the second-largest eigenvalue $\lambda_2(\mathbf{T})$.
Our goal is to estimate the average $\lambda_2$ of the ensemble $\mathcal{E}(\vec{n},\mathbf{M})$, and we do this using a mean-field approach on the supra-transition matrix $\mathbf{T}$ that is similar to Refs.~\cite{ch2.5-Grabow2012,ch2.5-Martin2014}.

Hereafter we will provide a mean-field approach for the general case, i.e. when the exact topology of inter-links and intra-links of all single layers are unknown.
Next, building on this approach, we will discuss the case for which we have full knowledge of the intra-link topology but we do not know the inter-link topology, and the case for which we have full knowledge of the inter-link topology but we do not know the intra-link topology.

\subsection*{Case I: unknown topology of inter-links and intra-links for all layers}

For this case we only assume knowledge of the ensemble parameters $\mathbf{M}$ and $\vec{n}$.
We define a mean-field adjacency matrix $\hat{\mathbf{A}}$ with a block structure similar to Eq.(\ref{eqn:NoN}), and for each $\mathbf{\hat{A}}_{st}$ we are only given the amount of links equal to $m_{st}$.
Since we do not know how these links are assigned to the entries $\mathbf{A}_{st}$, without loss of generality we assume a uniform distribution.
Thus, for the blocks of $\hat{\mathbf{A}}$ we have
\begin{equation}
 \mathbf{\hat{A}}_{st}=\left\{\frac{m_{st}}{n_s n_t}\right\}_{ij},\quad i\in \{1,\ldots,n_s\},\quad j\in\{1,\ldots,n_t\}\,.
\end{equation}

Following the discussion of Sect.~\ref{methods}, based on the mean-field adjacency matrix we define a mean-field transition matrix $\hat{\mathbf{T}}$.
The transition probability between any two nodes $i,j\in G_s$ for a fixed layer $s$ is the same since according to the available information individual nodes cannot be distinguished based on their connectivity.
Further, the transition probabilities between any two nodes $i\in G_s$ and $j\in G_t$ are the same for any two fixed layers $s$ and $t$.
Therefore, all block transition matrices $\mathbf{\hat{T}}_{st}$ contain the same value at each entry.
Hence we have
\begin{equation}
 \mathbf{\hat{T}}_{st}=\left\{\frac{m_{st}}{n_t\left(\sum_{k} m_{sk}\right)}\right\}_{ij},\quad i\in \{1,\ldots,n_s\},\quad j\in\{1,\ldots,n_t\}\,.
\end{equation}
Now, using Eq.(\ref{eqn:cond}) we can construct an aggregated supra-transition matrix $\mathfrak{T}$ with entries
\begin{equation}
\alpha_{st}=\frac{m_{st}}{\sum_{k} m_{sk}}\,.
\end{equation}
The aggregated supra-transition matrix $\mathfrak{T}$ describes the macro behavior of the multi-layer network ignoring the detailed topology of the inter-links and the intra-links of all single layers.
Since $\mathfrak{T}$ depends on a mean-field approach it only captures probabilistic assumptions of the ensemble $\mathcal{E}(\vec{n},\mathbf{M})$.
Thus, the spectrum of the mean-field supra-transition matrix $\hat{\mathbf{T}}$ can be calculated by
\begin{equation}
 Spec(\hat{\mathbf{T}})=Spec(\mathfrak{T})\cup\left(\bigcup_{s=1}^{L}\cup_{i=1}^{n_s-1}\{0\}\right)\,.
\end{equation}

To clarify the situation, let us briefly discuss the simple case of a network $\mathbf{G}$ that contains only two layers $G_1$ and $G_2$, for which we get
\begin{equation}
 \mathfrak{T}=
  \left(
    \begin{array}{cc}
    1-\alpha_{12}&\alpha_{12}\\
    \alpha_{21}&1-\alpha_{21}
    \end{array}
   \right)\,.
\end{equation}
Hence, for the mean-field matrix of a two-layered network we obtain
\begin{equation}
 Spec(\hat{\mathbf{T}})=\{1,1-\alpha_{12}-\alpha_{21},\underbrace{0,\ldots,0}_{|n|-2\text{ times}}\}\, .
\end{equation}
These results are remarkable, since the layer-aggregated transition matrix captures the same relevant eigenvalues as the mean-field transition matrix.
So, for the case of a diffusion process in two layers the eigenvalue of interest is $\lambda_2(\hat{\mathbf{T}})=1-\alpha_{12}-\alpha_{21}$.
However, so far we only considered the general case where we can only use the densities of inter-links and intra-links of all single layers.
In the following two sections we will investigate cases where we may have some additional information about either the inter-link topology between all single layers or the intra-link topology of all single layers.
For simplicity, we will restrict ourselves to the two layer case but, as shown in the appendix, our results can be generalized to multiple layers.

\subsection*{Case II: unknown inter-connectivity}\label{inter}

For this case we assume full knowledge of the intra-link topology, i.e. we know exactly which nodes are connected in all of the single layers.
But while we know the number of links between the layers we do not know how the layers are connected, i.e. we do not know the inter-link topology.
With respect to the general case discussed previously, 
here we have more information which is expected to improve the estimates of the ensemble average.

More precisely, we consider a two-layer network with unknown inter-link structure denoted by $E_I(\mathbf{G})$, but with a given amount of $m$ interconnecting links which connect the networks $G_1$ and $G_2$.
This means that the diagonal blocks $\mathbf{A}_1$ and $\mathbf{A}_2$ of the supra-adjacency matrix are given, but the off-diagonal blocks $\mathbf{A}_{12}$ and $\mathbf{A}_{21}$ can take any form such that they have exactly $m$ entries different from zero.
Since there are no further constraints on the ensemble, any random link configuration that consists of $m$ inter-links has the same probability to occur.
Therefore, we define the mean-field supra-adjacency blocks that correspond to the inter-links, $\mathbf{\hat{A}}_{12}$ and $\mathbf{\hat{A}}_{21}$, to have the same value $\frac{m}{n_1 n_2}$ in each entry.

For the supra-transition matrix we have to row-normalize $\mathbf{A}_1$ with $\mathbf{\hat{A}}_{12}$ and $\mathbf{\hat{A}}_{21}$ with $\mathbf{A}_{2}$.
The row sums of $\mathbf{\hat{A}}_{12}$ are all equal to $m/n_1$ and the row sums of $\mathbf{\hat{A}}_{21}$ are all equal to $m/n_2$, while the row sums of $\mathbf{A}_1$ and $\mathbf{A}_2$ correspond to the individual degrees of the nodes in $G_1$ and $G_2$ respectively.
Thus, we use the mean degree $\hat{d_1}$ of $G_1$ and $\hat{d_2}$ of $G_2$ in order to obtain the row-normalized transition matrix $\hat{\mathbf{T}}$, and to define the following factors
\begin{equation}\label{eq:fac}
 \alpha_{1}=\frac{n_1\hat{d_1}}{n_1\hat{d_1}+m},\quad \alpha_{2}=\frac{n_2\hat{d_2}}{n_2\hat{d_2}+m},\quad \alpha_{12}=\frac{m}{n_1\hat{d_1}+m},\quad \alpha_{21}=\frac{m}{n_2\hat{d_2}+m}\,.
\end{equation}
Note that $\alpha_1+\alpha_{12}=1$ and $\alpha_2+\alpha_{21}=1$.

Accordingly we define the mean transition blocks of $\mathbf{T}_{12}$ and $\mathbf{T}_{21}$.
\begin{align}
 \mathbf{\hat{T}}_{12}&=\left\{\frac{m}{n_2(n_1\hat{d_1}+m)}\right\}_{ij}\quad\text{for}\quad i\in\{1,\ldots,n_1\},j\in\{1,\ldots,n_2\}\\
 \mathbf{\hat{T}}_{21}&=\left\{\frac{m}{n_1(n_2\hat{d_2}+m)}\right\}_{ij}\quad\text{for}\quad i\in\{1,\ldots,n_2\},j\in\{1,\ldots,n_1\}\, .
\end{align}
This means that each of the off-diagonal block matrices that correspond to the mean-field inter-link structures have the same value at each matrix element, and the diagonal blocks are just rescaled transition matrices of $\mathbf{A}_1$ and $\mathbf{A}_2$,
\begin{equation}
 \mathbf{\hat{T}}_{1}=\left(1-\alpha_{12}\right)T(\mathbf{A}_1),\quad \mathbf{\hat{T}}_{2}=\left(1-\alpha_{21}\right)T(\mathbf{A}_2)\,,
\end{equation}
where $T(\mathbf{M})$ is the row-normalized version of matrix $\mathbf{M}$.
We denote the supra-transition matrix with the blocks constructed as described before by $\hat{\mathbf{T}}$,
\begin{equation}
 \hat{\mathbf{T}}=
 \left(
    \begin{array}{c|c}
      \mathbf{\hat{T}}_{1} & \mathbf{\hat{T}}_{12} \\
      \hline
      \mathbf{\hat{T}}_{21} & \mathbf{\hat{T}}_{2}
    \end{array}
   \right)\,.
\end{equation}

This mean-field matrix has some special properties.
First of all, the eigenvalues of $\mathbf{\hat{T}}_{1}$ and $\mathbf{\hat{T}}_{2}$ are also eigenvalues of $\hat{\mathbf{T}}$.
Further, the multi-layer aggregation of $\hat{\mathbf{T}}$ is given by
\begin{equation}
 \mathfrak{T}=
   \left(
    \begin{array}{cc}
      \alpha_{1} & \alpha_{12} \\
      \alpha_{21} & \alpha_{2}
    \end{array}
   \right)=
  \left(
    \begin{array}{cc}
      1-\alpha_{12} & \alpha_{12} \\
      \alpha_{21} & 1-\alpha_{21}
    \end{array}
   \right)\, ,
\end{equation}
so, the second-largest eigenvalue of $\mathfrak{T}$ is given by $\lambda_2=1-\alpha_{12}-\alpha_{21}$.

The second-largest eigenvalues of $\hat{\mathbf{T}_1}$ is equal to $(1-\alpha_{12})\lambda^1_2$ and of $\hat{\mathbf{T}_2}$ is equal to $(1-\alpha_{21})\lambda^2_2$, where $\lambda_2^1=\lambda_2(T(\mathbf{A}_1))$ and $\lambda_2^2=\lambda_2(T(\mathbf{A}_2))$.
Therefore the second largest eigenvalue of $\hat{\mathbf{T}}$, denoted by $\lambda_2(\hat{\mathbf{T}})$, fulfills the following condition (See Prop.~\ref{prop:spec} in the Appendix for more details)
\begin{equation}
 \lambda_2(\hat{\mathbf{T}})=\max\left(1-\alpha_{12}-\alpha_{21},(1-\alpha_{12})\lambda^1_2,(1-\alpha_{21})\lambda^2_2\right)\,.
 \label{eq:maxterm1}
\end{equation}

We would like to remind the reader that an eigenvalue $\lambda_2$ close to one implies slow convergence and $\lambda_2$ close to zero fast convergence.
From the above equation we can see that as long as $\lambda_2=1-\alpha_{12}-\alpha_{21}$ is maximal the inter-links are the limiting factor of the convergence in the multi-layer network.
This means that due to the inter-link topology the random walk diffusion is slowed down, and the influence of the intra-layer topologies is marginal to the process.

When either the term of $\lambda^1_2$ or $\lambda^2_2$ is maximal then the diffusion is limited by the single layer $G_1$ or $G_2$, and the additional information provided by the intra-layer topologies becomes relevant as it affects the diffusion process.
Note that the change between $\lambda_2$ and either $\lambda_2^1$ or $\lambda_2^2$ being maximal is related to the transitions pointed out in Ref.~\cite{ch2.5-Gomez2013,ch2.5-Radicchi2013}.

This behavior is shown in Fig.~\ref{fig:meanf} for the mean-field matrix of two interconnected networks.
The figure shows the second largest eigenvalues of $\hat{\mathbf{T}},\mathfrak{T}$ and the sparsest layer $\mathbf{T}_1$ for different amount of inter-links.
When only a few inter-links are present the interconnectivity between layers slows the process down, as it is expected.
When we increase the amount of inter-links, we can reach the convergence rate of single layers, which is the point where the single layers slow down the process.
However, with an increasing amount of inter-links the single layers lose their importance and the process is again slowed down by the inter-links.
This happens because a very large amount of inter-links force the random walker to switch between layers with increasing probability, thus, preventing diffusion to reach the whole layer.
To conclude, the mean-field transition matrix $\hat{\mathbf{T}}$ is a better estimation than $\mathfrak{T}$ in intermediate numbers of interlinks, which for our systems is in the region of approximately $550$ to $1800$ inter-links.
Otherwise, the information about the link densities as captured in $\mathfrak{T}$ is enough to approximate the second-largest eigenvalue of $\hat{\mathbf{T}}$, and thus the speed of diffusion.

\begin{figure}[t]
 \begin{center}\includegraphics[width=0.6\textwidth]{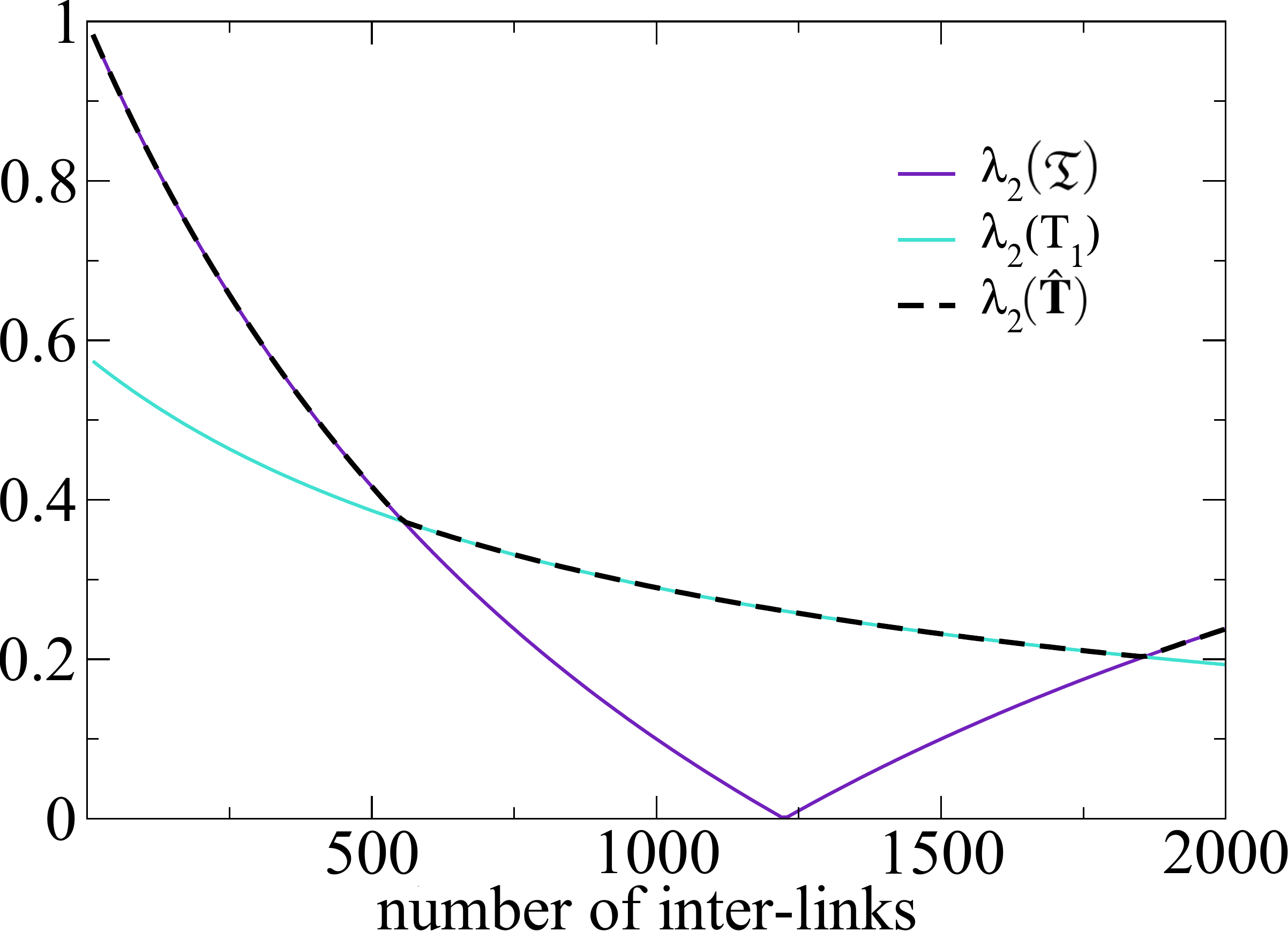}
 \caption{Eigenvalues of a mean-field approach of a two-layered network.
 Layer $1$ consists of an Erd\"os-R\'enyi network of $100$ nodes and $500$ links and Layer $2$ consists of an Erd\"os-R\'enyi network of $100$ nodes and $750$ links.
 The $x$-axis indicates the amount of inter-links randomly added across the layers.
 The lines indicate the second-largest eigenvalue of: black dashed: the mean-field supra-transition matrix $\lambda_2(\hat{\mathbf{T}})$, violet: the layer-aggregated matrix $\lambda_2(\mathfrak{T})$,
 turquoise: the larger single layer eigenvalues of $\lambda_2(\mathbf{T}_1)$ and $\lambda_2(\mathbf{T}_2)$.}
 \label{fig:meanf}
 \end{center}
\end{figure}

In general the spectrum of a mean-field matrix $\hat{\mathbf{T}}$ with unknown inter-link topology is given by
\begin{equation}
 Spec(\hat{\mathbf{T}})=\{1,\lambda_2,\ldots,\lambda_n\}\cup\left(\bigcup_{s=1}^n Spec(\mathbf{\hat{T}}_s)\setminus\lambda_1(\mathbf{\hat{T}}_s)\right)\, ,
 \label{eq:spec}
\end{equation}
or
\begin{equation}
 Spec(\hat{\mathbf{T}})=Spec(\mathfrak{T})\cup\left(\bigcup_{s=1}^n Spec(\mathbf{\hat{T}}_s)\setminus\lambda_1(\mathbf{\hat{T}}_s)\right)\, ,
 \label{eq:specag}
\end{equation}
where $\mathfrak{T}$ is the multi-layer aggregation of $\hat{\mathbf{T}}$ as described before (for details see Prop.~\ref{prop:spec} in the appendix).
This decomposition of eigenvalues  can also be useful for other network properties that depend on eigenvalues.

So far we provided an estimation based on the eigenvalues of a mean-field transition matrix $\hat{\mathbf{T}}$ that intends to approximate the ensemble average.
In reality however, ensemble realizations of multi-layer networks that contain layer $G_1$ and $G_2$ can deviate from the mean-field estimation.
This is shown in Fig.~\ref{fig:inter} (a) where we plot the second-largest eigenvalues of $\hat{\mathbf{T}}$, $\mathfrak{T}$, and ensemble averages over $100$ realizations of $\mathbf{T}$ against the number of inter-links between $G_1$ and $G_2$.
As we can see, the magenta colored dashed line showing the mean-field approximation of $\mathfrak{T}$ is a good proxy for the diffusion dynamics in the region when inter-links dominate, which is the case for either sparse or very dense inter-link topologies.
However, as shown by the cyan colored line, we can actually improve this approximation if we additionally consider the intra-links of all single layers.

There is a peak where the difference between the estimation and the ensemble averages $\Delta\lambda_2=\lambda_2(\mathbf{T})-\lambda_2(\hat{\mathbf{T}})$ reaches high values up to $0.225$, as shown in Fig.~\ref{fig:inter} (b).
This happens, on one hand, due to the large degree of freedom that comes from the absence of intra-connectivity informations within the layers.
On the other hand, the mean-field matrix assumes ``full-connectivity'' across layers, and even though this implies small weights for each single inter-link, it leads to a systematic bias towards overestimating the diffusion speed.
Nevertheless, we would like to highlight that the multi-layer aggregation provides a quite accurate estimation of the diffusion speed in the regimes where inter-links limit diffusion.

\begin{figure}[t]
  \includegraphics[width=0.46\textwidth]{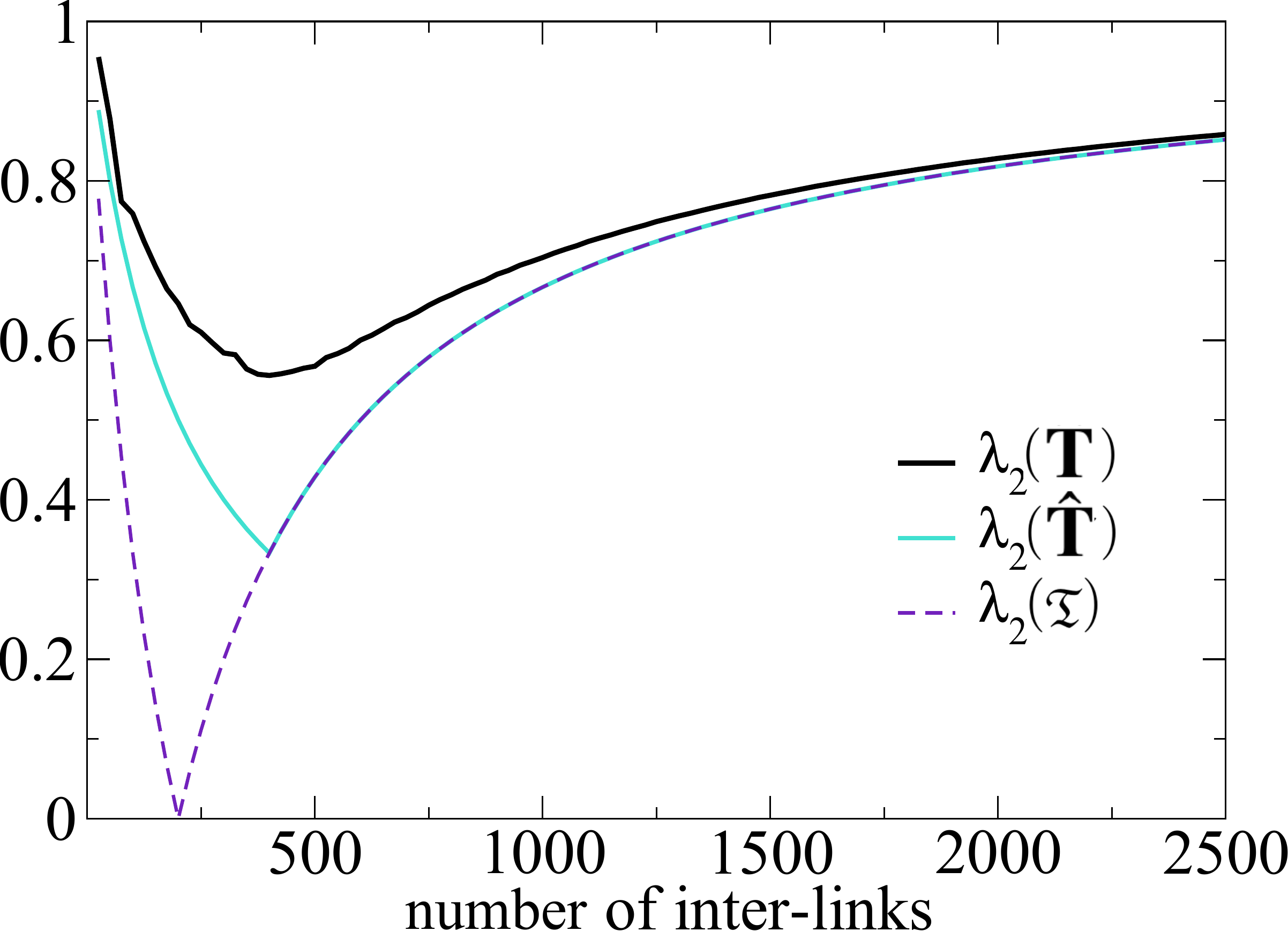}
  \includegraphics[width=0.52\textwidth]{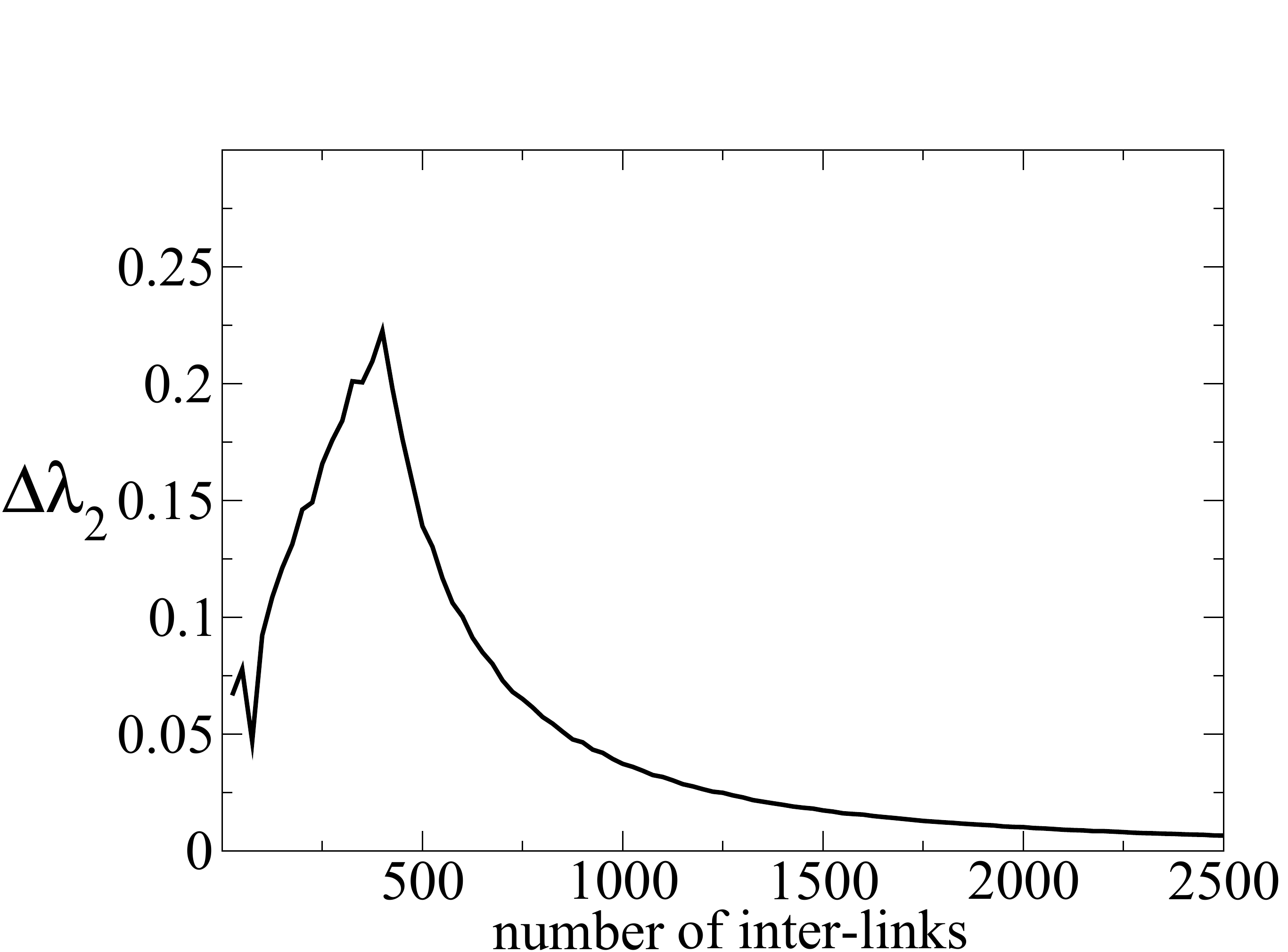}
  \begin{picture}(0,0)
  \put(-465,145){(a)}
  \put(-220,145){(b)}
  \end{picture}
  \caption{{\bf a)} Eigenvalues of a mean-field approach of a two-layered network.
  Layer $1$ and $2$ both consist of an Erd\"os-R\'enyi network of $50$ nodes and $100$ links but with different topologies.
  The $x$-axis indicates the amount of random inter-links added across layers.
  The lines indicate the second-largest eigenvalues of:
  black line: ensemble averages,
  turquoise line: mean-field estimate including intra-link topology,
  violet dashed line: mean-field only considering densities.
  {\bf b)} Eigenvalue difference between ensemble average and mean-field estimation $\Delta\lambda_2=\lambda_2(\mathbf{T})-\lambda_2(\hat{\mathbf{T}})$.
  }
  \label{fig:inter}
\end{figure}

\subsection*{Case III: Unknown intra-connectivity}\label{intra}

For this case we assume full knowledge of the inter-link topology, i.e. we know exactly how the layers are connected, but the intra-link topologies, i.e. how the nodes are connected within the single layers, are unknown.
More precisely, we consider two interconnected layers $G_1$ and $G_2$ of a multi-layer network, and we fix the inter-links $E_I(\mathbf{G})$ in a bipartite network structure that connects nodes of $G_1$ to nodes of $G_2$.
Since we have no information about the intra-link topologies of $G_1$ and $G_2$, we assume random connectivities within the layers, so that we only know the average degrees $\hat{d}_1$ and $\hat{d}_2$ of $G_1$ and $G_2$ respectively.
This means that the off-diagonal blocks $\mathbf{A}_{12}^\top=\mathbf{A}_{21}$ of the supra-adjacency matrix are given, but the diagonal blocks $\mathbf{A}_1$ and $\mathbf{A}_2$ are unknown.

Because we only know the average degrees $\hat{d_1}$ and $\hat{d_2}$ of the layers, we can define mean-field versions of the adjacency matrices such that
\begin{equation*}
 \hat{\mathbf{A}_1}=\left\{\frac{\hat{d_1}}{n_1}\right\}_{ij}\quad\text{and}\quad \hat{\mathbf{A}_2}=\left\{\frac{\hat{d_2}}{n_2}\right\}_{ij}\,.
\end{equation*}
However, even though we know the topology of the inter-links, we do not know which nodes exactly are connected to each other.
Hence we use the same approach as in Case II with $m$ equal to the amount of inter-links and the factors defined as in Eq.(\ref{eq:fac}).
Therefore we get the mean-field transition matrix $\mathbf{\mathbf{\hat{T}}}$ consisting of the following block matrices,
\begin{align}
 \mathbf{\hat{T}}_{1}&=\left\{\frac{\hat{d_1}}{n_1\hat{d_1}+m}\right\}_{ij}\quad\text{for}\quad i\in\{1,\ldots,n_1\},j\in\{1,\ldots,n_2\}\\
 \mathbf{\hat{T}}_{2}&=\left\{\frac{\hat{d_2}}{n_2\hat{d_1}+m}\right\}_{ij}\quad\text{for}\quad i\in\{1,\ldots,n_2\},j\in\{1,\ldots,n_1\}\, .
\end{align}
The off-diagonal blocks are just rescaled transition matrices of $\mathbf{A}_{12}$ and $\mathbf{A}_{21}$,
\begin{equation}
 \mathbf{\hat{T}}_{12}=\alpha_{12}T(\mathbf{A}_{12}),\quad \mathbf{\hat{T}}_{21}=\alpha_{21}T(\mathbf{A}_{21})\,.
\end{equation}

However, this time we are not able to compute exactly the single layer eigenvalues $\lambda^1_1$ and $\lambda^2_2$, as it was the case in Case II.
In particular, depending on the ensemble constraints we could only compute an average eigenvalue $\hat{\lambda_2}$ for a single layer.
Therefore, we can use the following maximization term
\begin{equation}
 \lambda_2(\hat{\mathbf{T}})=\max\left(1-\alpha_{12}-\alpha_{21},(1-\alpha_{12})\hat{\lambda^1_2},(1-\alpha_{21})\hat{\lambda^2_2}\right)\,,
\end{equation}
which is has the same form as in Case II (see Eq.(\ref{eq:maxterm1})).
Here again, as long as $\lambda_2=1-\alpha_{12}-\alpha_{21}$ is maximal the inter-links are the limiting factor of diffusion in the multi-layer network, which means that due to the inter-link topology the random walk diffusion is slowed down, and the influence of the intra-layer topologies is marginal to the process.
On the other hand, when either the average term of $\hat{\lambda^1_2}$ or $\hat{\lambda^2_2}$ is maximal then the diffusion is limited by the single layer $G_1$ or $G_2$, and the additional information provided by the intra-layer topologies becomes relevant as it affects the diffusion process.

In Fig.~\ref{fig:intra}(a), starting with initially empty intra-networks\footnote{Note that even though the intra-layer networks are empty initially, there is a number of inter-layer links which provide connectivity across the layers, similar to a bipartite network.}, we plot the second largest eigenvalues of $\mathfrak{T}$, $\hat{\mathbf{T}}$, and the ensemble average of $100$ realizations of $\mathbf{T}$ against the number of intra-links that are simultaneously and randomly added in both layers.
We observe that the general behavior is similar to Fig.~\ref{fig:inter}.
Thus, the multi-layer aggregation plotted in magenta approximates well the regions where the inter-links are the relevant factor, which is for very sparse and increasingly dense intra-links densities.
The difference between the mean-field and the ensemble average $\Delta\lambda_2=\lambda_2(\mathbf{T})-\lambda_2(\hat{\mathbf{T}})$ as seen in Figure~\ref{fig:intra}(b) again rises up to a peak of about $0.225$.

Our analysis shows that there is some form of symmetry in knowing the degree of the nodes in the single layers, but not knowing how they are connected to nodes in other layers and to knowing the inter-links between layers, but not the degree of their adjacent nodes.
Even though the ensembles generated from these two constraints can be much different, the relevance of inter-links or intra-links of all single layers to a diffusive process is comparable for both cases.

\begin{figure}[t]
  \includegraphics[width=0.46\textwidth]{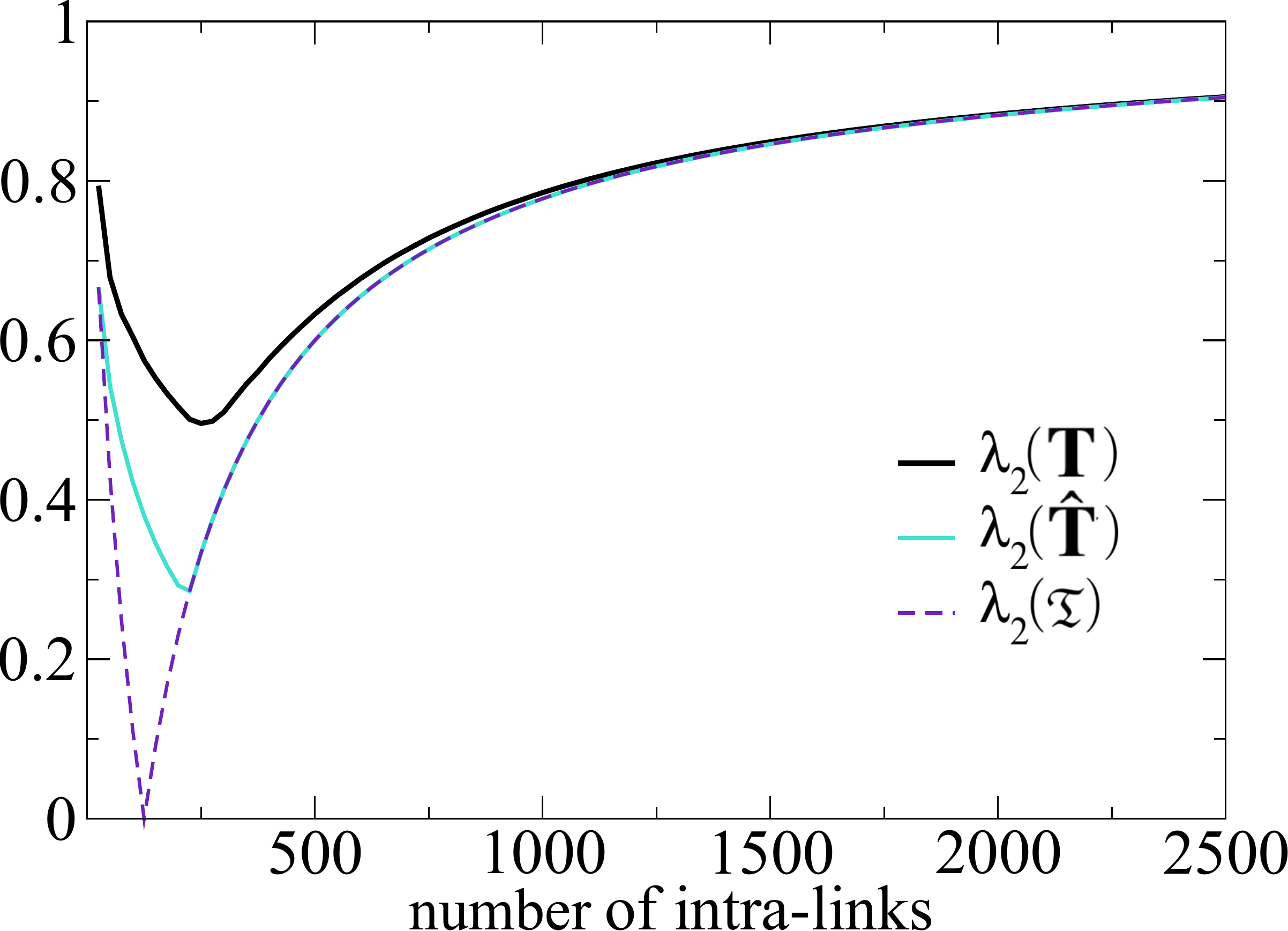}
  \includegraphics[width=0.52\textwidth]{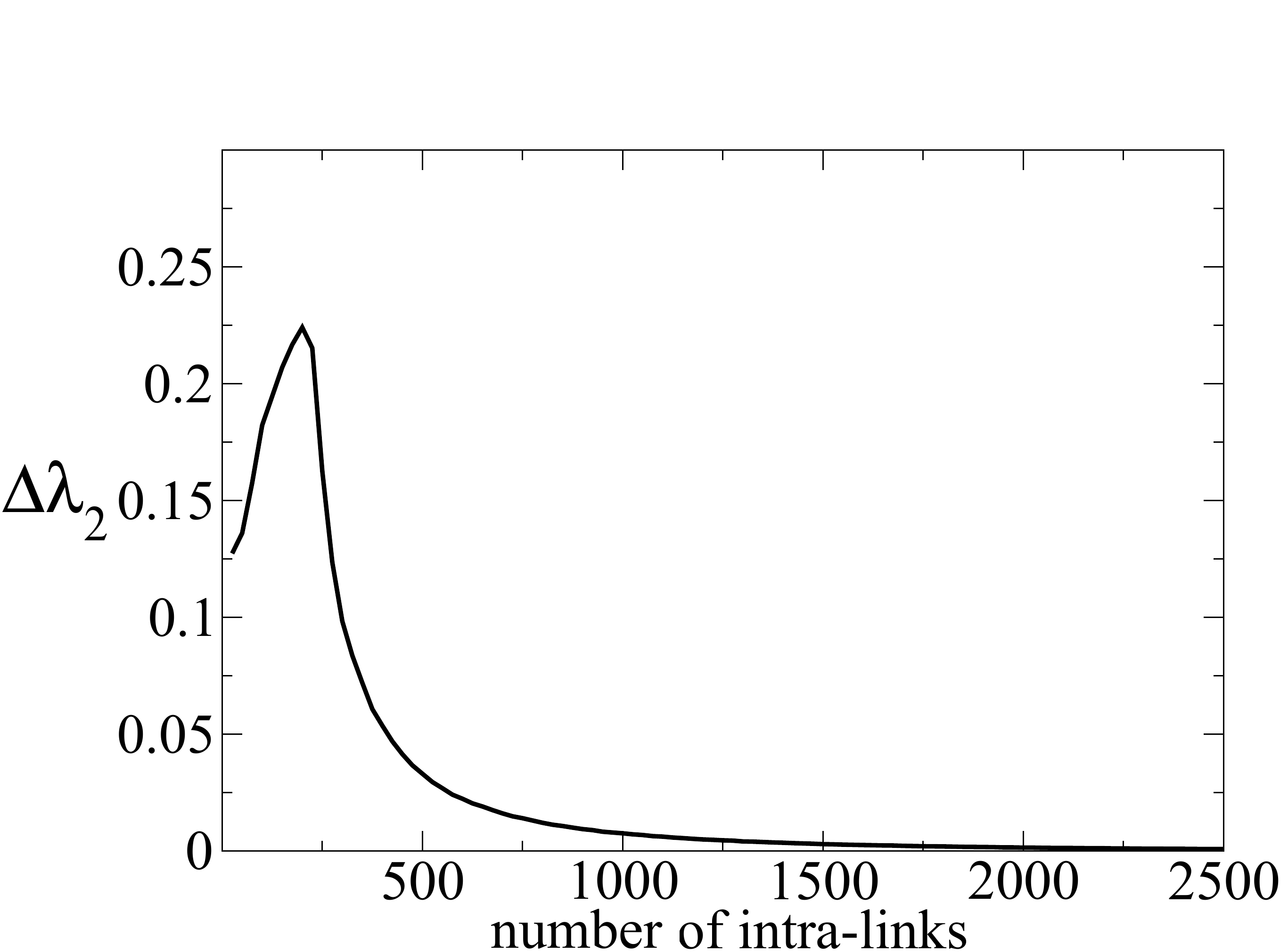}
  \begin{picture}(0,0)
  \put(-465,145){(a)}
  \put(-220,145){(b)}
  \end{picture}
 \caption{{\bf a)} Eigenvalues of a mean-field approach of a tow-layered network with $250$ inter-links.
 Layer $1$ and $2$  both consist of $50$ nodes but no edges.
 The $x$-axis indicates the amount of links intra-links that are simultaneously added to both layers.
 The lines indicate the second-largest eigenvalue of:
 black line: ensemble averages,
 turquoise line: mean-field estimate including inter-links,
 dashed violet line: mean-field only considering link densities.
 {\bf b)} Eigenvalue Difference between ensemble average and mean-field approach $\Delta\lambda_2=\lambda_2(\mathbf{T})-\lambda_2(\hat{\mathbf{T}})$.}
 \label{fig:intra}
\end{figure}

\section{Conclusion}\label{conclusion}

In this manuscript, we showed how an ensemble perspective can be applied to multi-layer networks in order to address realistic scenarios when only limited information is available.
More precisely, we focused on a diffusion process that runs on the multi-layer network and its relation to the spectrum of the supra-transition matrix.
We have shown that the convergence rate of the diffusion process is limited by either the inter-links or intra-links of the single layers and we identified for which relation of inter-link compared to intra-link densities it is sufficient to only consider transitions across layers, instead of the full information on all individual nodes.
This implies that we do not always need perfect information to make statements about a multi-layer network because, under certain conditions, we are still able to make analytical statements about the network only using partial information.
In realistic situations data can be an issue either due to availability constraints or due to their vast amounts.
In such cases, even though an exact analysis is impossible, we may still derive useful conclusions about processes that depend on the network spectrum (like diffusion and synchronization) using only aggregated statistics.

For our study we assumed the simplest case of random networks, therefore exploring other ways to couple the network layers or including link-weights and directed links and testing their influence on our results is up to future investigation.

\paragraph*{Acknowledgements.}
N.W., A.G. and F.S. acknowledge support from the EU-FET project MULTIPLEX 317532.

\section{Appendix}
\textbf{Note}: Unless stated otherwise, here vectors are considered to be row-vectors and multiplication of vectors with matrices are left multiplications.
\vspace{6pt}

We assume a multi-layer network $\mathbf{G}$ consisting of $L$ layers $G_1,\ldots,G_L$ and $n$ nodes.
A single layer $G_s$ contains $n_s$ nodes and therefore $\sum_{s=1}^L n_s=n$.
For a multi-layer network $\mathbf{G}$ we define the \textit{supra-transition} matrix that can be represented in block structure according to the layers:
\begin{equation*}
  \mathbf{T} =
  \left(
    \begin{array}{c|c|c|c|c}
      \mathbf{T}_1 & \ldots & \mathbf{T}_{1t}& \ldots& \mathbf{T}_{sL} \\
      \hline
      \vdots & \ddots & \vdots& \ddots& \vdots \\
      \hline
      \mathbf{T}_{s1} & \ldots & \mathbf{T}_{st}&\ldots&  \mathbf{T}_{sL} \\
      \hline
      \vdots & \ddots & \vdots& \ddots& \vdots \\
      \hline
      \mathbf{T}_{L1} & \ldots & \mathbf{T}_{Lt}&\ldots&  \mathbf{T}_L
    \end{array}
   \right)\,.
\end{equation*}
Each $\mathbf{T}_{st}$ contains all the transition probabilities from nodes in $G_s$ to nodes in $G_t$.
Assuming Eq.(\ref{eqn:cond})
it follows that $\mathbf{T}_{st}=\alpha_{st}\mathbf{R}_{st}$ where $\mathbf{R}_{st}$ is a row stochastic matrix.
This means that all $\mathbf{T}_{st}$ are scaled transition matrices.
The factor $\alpha_{st}$ represents the weighted fraction of all links starting in $G_s$ that end up in $G_t$.
\\
In this respect we define the aggregated transition matrix $\mathfrak{T}$ of dimension $L$,
\begin{equation}
 \mathfrak{T}=
  \left(
    \begin{array}{c|c|c|c|c}
      \alpha_{11} & \ldots & \alpha_{1t}& \ldots& \alpha_{sL} \\
      \hline
      \vdots & \ddots & \vdots& \ddots& \vdots \\
      \hline
      \alpha_{s1} & \ldots & \alpha_{st}&\ldots&  \alpha_{sL} \\
      \hline
      \vdots & \ddots & \vdots& \ddots& \vdots \\
      \hline
      \alpha_{L1} & \ldots & \alpha_{Lt}&\ldots&  \alpha_L
    \end{array}
   \right)\,.
\end{equation}
Each vector $v$ of dimension $n$ can be split according to the layer-separation given by $\mathbf{G}$,
\begin{equation*}
 v=\left(v^{(1)},\ldots,v^{(k)},\ldots,v^{(L)}\right)\,.
\end{equation*}
Each component $v^{(k)}$ has exactly dimension $n_k$.
We define the \textit{layer-aggregated vector} $\mathfrak{v}=(\mathfrak{v}_1,\ldots,\mathfrak{v}_L)$ of dimension $L$ as follows
\begin{equation*}
 \forall k\in\{1,\ldots,L\}\quad \mathfrak{v}_k=\sum_{i=1}^{n_k} \left[v^{(k)}\right]_i\,.
\end{equation*}
We use the bracket notation $[v]_i$ to represent the $i$-th entry of the vector $v$.
Analogously, by $[v\mathbf{M}]_i$ we mean the $i$-th entry $w_i$ that represents the multiplication of $v$ with a matrix $\mathbf{M}$, i.e. $w=v\mathbf{M}$. 
Further, by $|v|$ we indicate the sum of the entries of $v$, $|v|=\sum_i v_i=\sum_i [v]_i$.

\begin{thm}\label{thm:main}
For a multi-layer network $\mathbf{G}$ consisting of $L$ layers we assume the supra-transition matrix $\mathbf{T}$ to consist of block matrices $\mathbf{T}_{st}$ such that
for all $s,t\in\{1,\ldots,L\}$, $\mathbf{T}_{st}=\alpha_{st}\mathbf{R}_{st}$ where $\alpha_{st}\in\mathbb{Q}$ and $\mathbf{R}_{st}$ is a row stochastic transition matrix.
The multi-layer aggregation is defined by ${\mathfrak{T}}=\{\alpha_{st}\}_{st}$.
If an eigenvalue $\lambda$ of the matrix $\mathbf{T}$ corresponds to an eigenvector $v$ with a layer-aggregation $\mathfrak{v}$ that satisfies $|\mathfrak{v}|\neq 0$
then $\lambda$ is also an eigenvalue of ${\mathfrak{T}}$.
\end{thm}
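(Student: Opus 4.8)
The plan is to work with the left-eigenvector relation $v\mathbf{T} = \lambda v$ (as dictated by the row-vector / left-multiplication convention of this appendix), apply the layer-aggregation map $v \mapsto \mathfrak{v}$ to both sides, and show that this map intertwines $\mathbf{T}$ with $\mathfrak{T}$ on the relevant vectors. The whole argument rests on one elementary structural fact, which I would record first: if $\mathbf{R}$ is a row-stochastic $p\times q$ matrix then $\mathbf{R}\mathbf{1}_q = \mathbf{1}_p$, so for every row vector $u\in\mathbb{R}^p$ one has $|u\mathbf{R}| = u\mathbf{R}\mathbf{1}_q = u\mathbf{1}_p = |u|$; in words, right-multiplication by a row-stochastic matrix preserves the sum of coordinates. (Blocks with $\alpha_{st}=0$ are harmless, since the corresponding term below vanishes regardless of how $\mathbf{R}_{st}$ is chosen; also the rationality hypothesis $\alpha_{st}\in\mathbb{Q}$ is not needed for this particular statement.)

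Next I would split $v = (v^{(1)},\ldots,v^{(L)})$ into layer blocks and set $w := v\mathbf{T}$. Reading off the $t$-th layer block of the product and using $\mathbf{T}_{st}=\alpha_{st}\mathbf{R}_{st}$ gives $w^{(t)} = \sum_{s=1}^{L} v^{(s)}\mathbf{T}_{st} = \sum_{s=1}^{L}\alpha_{st}\, v^{(s)}\mathbf{R}_{st}$. Summing the entries of this $n_t$-dimensional block and applying the coordinate-sum-preservation fact to each term $v^{(s)}\mathbf{R}_{st}$ (here $\mathbf{R}_{st}$ is the $n_s\times n_t$ row-stochastic block), I get
\begin{equation*}
\mathfrak{w}_t = \sum_{s=1}^{L}\alpha_{st}\,\bigl|v^{(s)}\mathbf{R}_{st}\bigr| = \sum_{s=1}^{L}\alpha_{st}\,\bigl|v^{(s)}\bigr| = \sum_{s=1}^{L}\alpha_{st}\,\mathfrak{v}_s = \bigl[\mathfrak{v}\mathfrak{T}\bigr]_t\,,
\end{equation*}
so the layer-aggregation of $v\mathbf{T}$ is exactly $\mathfrak{v}\mathfrak{T}$. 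On the other hand, $v\mathbf{T}=\lambda v$ forces $w^{(t)} = \lambda v^{(t)}$ for every $t$, hence $\mathfrak{w}_t = \lambda\,\mathfrak{v}_t$. Comparing the two expressions for $\mathfrak{w}$ yields $\mathfrak{v}\mathfrak{T} = \lambda\,\mathfrak{v}$.

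Finally I would invoke the hypothesis $|\mathfrak{v}|\neq 0$, which forces $\mathfrak{v}\neq 0$, so that $\mathfrak{v}$ is a genuine (left) eigenvector of $\mathfrak{T}$ with eigenvalue $\lambda$; hence $\lambda\in Spec(\mathfrak{T})$, as claimed. I do not expect a real obstacle in this proof — it is essentially a computation. The only points that require care are the bookkeeping with the rectangular blocks $\mathbf{R}_{st}\colon\mathbb{R}^{n_s}\to\mathbb{R}^{n_t}$, and the observation that \emph{row} stochasticity is precisely what is needed for the coordinate-sum to be preserved under right multiplication (the analogous statement for right eigenvectors would instead require column stochasticity of the blocks), which is why the row-vector/left-multiplication convention is the natural one for this result.
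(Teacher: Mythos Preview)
Your proposal is correct and follows essentially the same route as the paper's proof: both split $v$ into layer blocks, compute the $t$-th block of $v\mathbf{T}$ as $\sum_s \alpha_{st}\,v^{(s)}\mathbf{R}_{st}$, use row-stochasticity of each $\mathbf{R}_{st}$ to conclude that the coordinate sum of $v^{(s)}\mathbf{R}_{st}$ equals $\mathfrak{v}_s$, and assemble this into $\mathfrak{v}\mathfrak{T}=\lambda\mathfrak{v}$. Your write-up is slightly tidier in that you isolate the key fact $|u\mathbf{R}|=|u|$ for row-stochastic $\mathbf{R}$ up front and explicitly note that $|\mathfrak{v}|\neq 0$ is used only to guarantee $\mathfrak{v}\neq 0$, but the argument is the same.
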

\begin{proof}
 Assume $v$ is a left eigenvector of $\mathbf{T}$ corresponding to the eigenvalue $\lambda$.
 Therefore, it holds that $v\mathbf{T}=\lambda v$.
 Let $v^{(k)}$ be the $k$-th part of $v$ that corresponds to the layer $G_k$.
 We can write the matrix multiplication in block structure.
 \begin{equation*}
  \left(v^{(1)},\ldots,v^{(k)},\ldots,v^{(L)}\right)\mathbf{T}=\left(\sum_l v^{(l)}\mathbf{T}_{l1},\ldots,\sum_l v^{(l)}\mathbf{T}_{lk},\ldots,\sum_l v^{(l)}\mathbf{T}_{lL}\right)\,.
 \end{equation*}
 Each $v^{(k)}$ is a row vector which length is equal to the amount of nodes $n_k$ in $G_k$.
 The transformation $\sum_l v^{(l)}\mathbf{T}_{lk}$ is also a row vector with the same length as $v^{(k)}$.
 According to the eigenvalue equation it holds that for all $k\in\{1,\ldots,L\}$
 \begin{equation*}
  \lambda v^{(k)}=\left(v\mathbf{T}\right)^{(k)}=\sum_l v^{(l)}\mathbf{T}_{lk}\,.
 \end{equation*}
 Now let us denote the sum of the vector entries of $v^{(k)}$ as
 \begin{equation*}
   \mathfrak{v}_k=\sum_i \left[v^{(k)}\right]_i\,.
 \end{equation*}
 Further, we define layer-aggregated vector consisting of this sums by $\mathfrak{v}=\left(\mathfrak{v}_1,\ldots,\mathfrak{v}_n\right)$.
 Note that for a general row stochastic matrix $\mathbf{M}$ and its multiplication with an eigenvector $v$ to the eigenvalue $\lambda$ it holds that $\lambda\sum_j [v]_j=\sum_j [v\mathbf{M}]_j$.
 For the components after multiplication with $\mathbf{T}$ we can deduce
 \begin{align*}
  \sum_i\left[\left(v\mathbf{T}\right)^{(k)}\right]_i&=\sum_i\left[\sum_l v^{(l)}\mathbf{T}_{lk}\right]_i=\sum_i\left[\sum_l \alpha_{lk}v^{(l)}\mathbf{R}_{lk}\right]_i\\
  &=\sum_l \alpha_{lk}\sum_i\left[v^{(l)}\mathbf{R}_{lk}\right]_i=\sum_l \alpha_{lk}\sum_i\left[v^{(l)}\right]_i=\sum_l \alpha_{lk}\mathfrak{v}_l\,.
 \end{align*}
 If we multiply $\mathfrak{v}$ with $\mathfrak{T}$ and look at a single entry of $\mathfrak{v}\mathfrak{T}$ we get
 \begin{equation*}
  \left[\mathfrak{v}\mathfrak{T}\right]_k=\sum_l\mathfrak{v}_l\mathfrak{T}_{lk}=\sum_l\mathfrak{v}_l\alpha_{lk}\,.
 \end{equation*}
 Hence it holds that
 \begin{equation*}
  \left[\mathfrak{v}\mathfrak{T}\right]_k=\sum_i\left[\left(v\mathbf{T}\right)^{(k)}\right]_i\, ,
 \end{equation*}
 and therefore
 \begin{equation*}
  \mathfrak{v}\mathfrak{T}=\left(\sum_i\left[\left(v_2\mathbf{T}\right)^{(1)}\right]_i,\ldots,\sum_i\left[\left(v_2\mathbf{T}\right)^{(L)}\right]_i\right)\,.
 \end{equation*}
 Finally since $\mathbf{T}$ is row stochastic and $\lambda v=v\mathbf{T}$ we have that
 \begin{align*}
  \lambda \mathfrak{v}&=\lambda\left(\mathfrak{v}_1,\ldots,\mathfrak{v}_n\right)\\
  &=\lambda\left(\sum_i[v^{(1)}]_i,\ldots,\sum_i[v^{(L)}]_i\right)\\
  &=\left(\sum_i\left[\left(v\mathbf{T}\right)^{(1)}\right]_i,\ldots,\sum_i\left[\left(v\mathbf{T}\right)^{(L)}\right]_i\right)\\
  &=\mathfrak{v}\mathfrak{T}\,.
 \end{align*}
 Therefore, $\lambda$ is also an eigenvalue of $\mathfrak{T}$ to the eigenvector $\mathfrak{v}$ defined as before.
 It is important to note that this only holds if $|\mathfrak{v}|\neq 0$.
\end{proof}

The procedure used in the proof of the previous theorem applies to several eigenvalues of $\mathbf{T}$ but at most $L$ of them.
Next we give a proposition for the reversed statement of Thm~\ref{thm:main}.

\begin{prop}\label{prop:rev}
Let $\mathbf{G}$ be a multi-layer network that consists of $L$ layers and fulfills all of the conditions of Thm~\ref{thm:main}.
Let $\mathfrak{T}=\{\alpha_{st}\}_{st}$ be the multi-layer aggregation of $\mathbf{T}$.
If $\lambda$ is an eigenvalue of $\mathfrak{T}$ then $\lambda$ is also an eigenvalue of $\mathbf{T}$.
\end{prop}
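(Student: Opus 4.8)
The plan is to exhibit, for each eigenvalue $\lambda$ of $\mathfrak{T}$, an explicit eigenvector of $\mathbf{T}$ with the same eigenvalue. The key observation is that the layer-aggregation used in Thm~\ref{thm:main} (which sums the entries of a \emph{row} vector over each layer) has a natural dual on \emph{column} vectors: the ``lift'' that assigns to a vector $w\in\mathbb{C}^L$ the vector $v\in\mathbb{C}^n$ which is constant on each layer, $[v]_i=w_k$ whenever node $i$ lies in $G_k$. I would show that this lift intertwines $\mathfrak{T}$ and $\mathbf{T}$, i.e. $\mathbf{T}v=\lambda v$ whenever $\mathfrak{T}w=\lambda w$.

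Concretely, first pick a nonzero column vector $w=(w_1,\dots,w_L)^\top$ with $\mathfrak{T}w=\lambda w$; componentwise this reads $\sum_{t=1}^L\alpha_{st}w_t=\lambda w_s$ for every $s$ (such a $w$ exists over $\mathbb{C}$ since $\lambda$ is a root of the characteristic polynomial of the square matrix $\mathfrak{T}$). Define $v\in\mathbb{C}^n$ by $[v]_i=w_s$ for $i\in V(G_s)$; since $w\neq 0$ we have $v\neq 0$. Then for $i\in V(G_s)$ compute $[\mathbf{T}v]_i=\sum_{j=1}^n \mathbf{T}_{ij}[v]_j=\sum_{t=1}^L w_t\sum_{j\in V(G_t)}\mathbf{T}_{ij}$. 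Using $\mathbf{T}_{st}=\alpha_{st}\mathbf{R}_{st}$ with $\mathbf{R}_{st}$ row stochastic, the inner sum $\sum_{j\in V(G_t)}\mathbf{T}_{ij}=\alpha_{st}$ (the $i$-th row sum of $\mathbf{R}_{st}$ equals one), so $[\mathbf{T}v]_i=\sum_t\alpha_{st}w_t=\lambda w_s=\lambda[v]_i$. Hence $\mathbf{T}v=\lambda v$, which shows $\lambda\in Spec(\mathbf{T})$.

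The step that requires care — and the natural trap — is the temptation to mimic the proof of Thm~\ref{thm:main} literally and build a \emph{left} eigenvector of $\mathbf{T}$ by spreading a left eigenvector of $\mathfrak{T}$ uniformly over the nodes of each layer. That does not work, because the blocks $\mathbf{R}_{st}$ are only row stochastic, not doubly stochastic, so a vector that is constant on layers is not preserved under left multiplication by $\mathbf{T}$: its image involves the \emph{column} sums of the $\mathbf{R}_{st}$, which need not be constant. Passing instead to right eigenvectors — column vectors constant on layers — is exactly the dual picture, and there the row-stochasticity of the $\mathbf{R}_{st}$ is precisely what makes the computation close. I would also note that, in contrast with Thm~\ref{thm:main}, no side condition such as $|\mathfrak{v}|\neq 0$ is needed in this direction, and row-stochasticity of $\mathbf{T}$ itself is not used; only the factorization $\mathbf{T}_{st}=\alpha_{st}\mathbf{R}_{st}$ with $\mathbf{R}_{st}$ row stochastic enters. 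Combined with Thm~\ref{thm:main}, this gives that the $L$ eigenvalues of $\mathfrak{T}$ form (as a set, with the appropriate multiplicity bookkeeping) a subset of $Spec(\mathbf{T})$.
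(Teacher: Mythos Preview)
Your proof is correct and follows essentially the same route as the paper: take a right (column) eigenvector $\mathfrak{w}$ of $\mathfrak{T}$, lift it to the $n$-dimensional vector that is constant on each layer, and use row-stochasticity of the blocks $\mathbf{R}_{st}$ to verify $\mathbf{T}v=\lambda v$. Your explicit remark about why the naive left-eigenvector lift fails is a nice addition not present in the paper's proof.
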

\begin{proof}
 Assume that $\lambda$ is an eigenvalue of $\mathfrak{T}$.
 For each matrix there exist a left and right eigenvector that correspond to the same eigenvalue $\lambda$.
 Assume the $\mathfrak{w}$ is the right eigenvector and therefore a column vector.
 Hence $\mathfrak{T}\mathfrak{w}=\lambda\mathfrak{w}$ and
 \begin{equation}
  \mathfrak{T}\mathfrak{w}=\left(\sum_j \alpha_{1j}\mathfrak{w}_j,\ldots,\sum_j \alpha_{kj}\mathfrak{w}_j,\ldots,\sum_j \alpha_{Lj}\mathfrak{w}_j\right)^\top=\lambda\mathfrak{w}\,.
 \end{equation}
 Now we generate a column vector $w$ of dimension $n$ such that for all the layer components $w^{(k)}$ it holds that
 \begin{equation}
  w^{(k)}=(\mathfrak{w}_k,\ldots,\mathfrak{w}_k)^\top\,.
 \end{equation}
 Next we perform a right multiplication of $w$ with $\mathbf{T}$,
  \begin{align*}
  \mathbf{T}w&=\left(\sum_{l} \mathbf{T}_{1l}w^{(l)},\ldots,\sum_{l} \mathbf{T}_{kl}w^{(l)},\ldots,\sum_{l} \mathbf{T}_{Ll}w^{(l)}\right)^\top\\
  &=\left(\sum_{l} \alpha_{1l}\mathbf{R}_{1l}w^{(l)},\ldots,\sum_{l} \alpha_{kl}\mathbf{R}_{kl}w^{(l)},\ldots,\sum_{l} \alpha_{Ll}\mathbf{R}_{Ll}w^{(l)}\right)^\top\,.
  \end{align*}
  Since all $\mathbf{R}_{st}$ are row stochastic matrices and $w^{(l)}$ contains only the value $\mathfrak{w}_l$ for each entry we get $\mathbf{R}_{st}w^{(l)}=w^{(l)}$.
  It follows that
    \begin{align*}
  \mathbf{T}w&=\left(\sum_{l} \alpha_{1l}w^{(l)},\ldots,\sum_{l} \alpha_{kl}w^{(l)},\ldots,\sum_{l} \alpha_{Ll}w^{(l)}\right)^\top\\
  &=\left(\lambda w^{(1)},\ldots,\lambda w^{(k)},\ldots,\lambda w^{(L)}\right)^\top\\
  &=\lambda w\,.
  \end{align*}
  And therefore $\lambda$ is also an eigenvalue of $\mathbf{T}$.
\end{proof}

In the case of a diffusion process we are especially interested in the second-largest eigenvalue of $\mathbf{T}$, denoted by $\lambda_2(\mathbf{T})$, which is related to algebraic connectivity of $\mathbf{T}$.
In this perspective the following corollary is useful:
\begin{cor}\label{cor:eig}
Let $\mathbf{G}$ be a multi-layer network consisting of $L$ layers that fulfill all of the conditions of Thm~\ref{thm:main}.
Further assume that $\mathbf{G}$ is partitioned according to a spectral partitioning, i.e. according to the eigenvector corresponding to $\lambda_2(\mathbf{T})$, then $\lambda_2(\mathbf{T})=\lambda_2(\mathfrak{T})$.
\end{cor}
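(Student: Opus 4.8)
The plan is to establish the two inequalities $|\lambda_2(\mathfrak{T})|\le|\lambda_2(\mathbf{T})|$ and $|\lambda_2(\mathbf{T})|\le|\lambda_2(\mathfrak{T})|$ and then conclude equality. The first is essentially immediate from Prop.~\ref{prop:rev}: every one of the $L$ eigenvalues of the row-stochastic matrix $\mathfrak{T}$ lies in $Spec(\mathbf{T})$, so $Spec(\mathfrak{T})\subseteq Spec(\mathbf{T})$. Assuming, as throughout, that $\mathbf{T}$ is primitive, $1$ is its unique eigenvalue of modulus one and is simple; hence the same holds for $\mathfrak{T}$, and $\lambda_2(\mathfrak{T})$ is an eigenvalue of $\mathbf{T}$ of modulus strictly less than one, giving $|\lambda_2(\mathfrak{T})|\le|\lambda_2(\mathbf{T})|$.

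For the reverse inequality I would show directly that $\lambda_2(\mathbf{T})$ is itself an eigenvalue of $\mathfrak{T}$. Let $v_2$ be a left eigenvector of $\mathbf{T}$ for $\lambda_2(\mathbf{T})$, and let $\mathfrak{v}_2$ denote its layer-aggregation. The chain of identities in the proof of Thm.~\ref{thm:main} uses only the structural hypothesis $\mathbf{T}_{st}=\alpha_{st}\mathbf{R}_{st}$ with $\mathbf{R}_{st}$ row-stochastic, so it applies verbatim to $v_2$ and yields $\mathfrak{v}_2\mathfrak{T}=\lambda_2(\mathbf{T})\,\mathfrak{v}_2$. The only thing left is to verify that $\mathfrak{v}_2\ne 0$, i.e. that $\mathfrak{v}_2$ is a genuine eigenvector rather than the zero vector.

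This last point is where the spectral-partitioning hypothesis enters, and it is the one genuinely delicate step; one cannot simply invoke Thm.~\ref{thm:main} as a black box, because $\lambda_2(\mathbf{T})\ne 1$ forces $|v_2|=0$ and hence $|\mathfrak{v}_2|=0$, so the theorem's hypothesis $|\mathfrak{v}|\ne 0$ fails. However, by definition of spectral partitioning all entries of each layer-block $v_2^{(k)}$ carry a common sign, so $\mathfrak{v}_k=\sum_i[v_2^{(k)}]_i$ vanishes precisely when $v_2^{(k)}\equiv 0$ and otherwise equals that common sign times a positive number — there is no internal cancellation. Since $v_2\ne 0$, some block $v_2^{(k)}$ is nonzero, whence $\mathfrak{v}_k\ne 0$ and therefore $\mathfrak{v}_2\ne 0$.

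Putting the pieces together, $\mathfrak{v}_2$ is an eigenvector of $\mathfrak{T}$ with eigenvalue $\lambda_2(\mathbf{T})$, so $\lambda_2(\mathbf{T})\in Spec(\mathfrak{T})$; being of modulus strictly below the simple top eigenvalue $1$ of $\mathfrak{T}$, it satisfies $|\lambda_2(\mathbf{T})|\le|\lambda_2(\mathfrak{T})|$, and combined with the first inequality this gives $\lambda_2(\mathbf{T})=\lambda_2(\mathfrak{T})$. The main obstacle is thus conceptual rather than computational: recognising that the literal statement of Thm.~\ref{thm:main} does not cover $v_2$, that nonetheless its proof does once one only requires $\mathfrak{v}_2\ne 0$, and that the sign condition of spectral partitioning is exactly the hypothesis that prevents the degenerate collapse $\mathfrak{v}_2=0$. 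A secondary point worth handling is excluding the periodic/bipartite boundary case in which $|\lambda_2(\mathbf{T})|=1$, which is ruled out by the standing primitivity assumption on $\mathbf{T}$.
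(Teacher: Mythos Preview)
Your argument follows the same route as the paper's: use Prop.~\ref{prop:rev} for the inclusion $Spec(\mathfrak{T})\subseteq Spec(\mathbf{T})$, and use the sign condition from spectral partitioning to ensure that the layer-aggregated vector $\mathfrak{v}_2$ is nonzero so that the computation underlying Thm.~\ref{thm:main} gives $\lambda_2(\mathbf{T})\in Spec(\mathfrak{T})$. You are in fact more precise than the paper on the key point: you correctly note that the literal hypothesis $|\mathfrak{v}|\ne 0$ of Thm.~\ref{thm:main} can never hold for $\lambda\ne 1$ (since $|\mathfrak{v}|=|v|=0$ for any left eigenvector of a row-stochastic matrix with eigenvalue $\ne 1$), and that the condition actually required---and actually delivered by spectral partitioning---is $\mathfrak{v}_2\ne 0$ as a vector. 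The paper's proof writes ``$|\mathfrak{v}_2|\ne 0$'' where it means ``$\mathfrak{v}_2\ne 0$''; your version repairs this.
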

\begin{proof}
 In general all the eigenvectors of a transition matrix, except the eigenvector corresponding to the largest eigenvalue that is equal to one, sum up to zero.
 However, these eigenvectors consist of positive and negative entries that allow for a spectral partitioning.
 Especially the eigenvector $v_2$ that corresponds to the second-largest eigenvalue $\lambda_2(\mathbf{T})$, can be used for the partitioning of the network.
 This eigenvector is related to the Fiedler vector that is also used for spectral bisection~\cite{ch2.5-Fiedler1973}.
 Therefore if the layer-partition of $\mathbf{G}$ coincides with this spectral partitioning we assure that the layer-aggregated vector of $v_2$  satisfies $|\mathfrak{v}_2|\neq 0$.
 Considering this and Prop~\ref{prop:rev} the corollary follows directly from Thm~\ref{thm:main}.
\end{proof}

Given Eq.(\ref{eqn:cond}) we can fully describe the spectrum of $\mathbf{T}$ based on the intra-layers transition blocks $\mathbf{T}_{i}$ for $i\in\{1,\ldots,n\}$ and the spectrum of $\mathfrak{T}$.
Note that with uniform columns of a matrix $\mathbf{M}$ we mean that each column of $\mathbf{M}$ contains the same value in each entry.
However, this value can be different for different columns.
\begin{prop}\label{prop:spec}
Let $\mathbf{T}$ be the supra-transition matrix of a multi-layer network $\mathbf{G}$ that consist of $L$ layers and satisfies Eq.(\ref{eqn:cond}).
If $\mathbf{T}$ has off-diagonal block matrices $\mathbf{T}_{st}$, for $s,t\in\{1,\ldots,n\}$ and $s\neq t$, that all have uniform columns, then the spectrum of $\mathbf{T}$ can be decomposed as
\begin{equation}
 Spec(\mathbf{T})=\{1,\lambda_2,\ldots,\lambda_L\}\cup\left(\bigcup_{s=1}^L Spec(\mathbf{T}_s)\setminus\{\lambda_1(\mathbf{T}_s)\}\right)\, ,
\end{equation}
where $\mathbf{T}_s$ are the block matrices of $\mathbf{T}$ corresponding to the single layers $G_s$ and $\lambda_1(\mathbf{T}_s)$ the largest eigenvalue of $\mathbf{T}_s$.
The eigenvalues $\lambda_2,\ldots,\lambda_L$ are attributed to the interconnectivity of layers.
\end{prop}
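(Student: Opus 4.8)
The plan is to realize $\mathbb{C}^n$ as a direct sum of $\mathbf{T}$-invariant subspaces, one carrying the $L$ ``inter-layer'' eigenvalues $Spec(\mathfrak{T})=\{1,\lambda_2,\dots,\lambda_L\}$ and one carrying the $n-L$ ``intra-layer'' eigenvalues $\bigcup_s\big(Spec(\mathbf{T}_s)\setminus\{\lambda_1(\mathbf{T}_s)\}\big)$, and then to compare characteristic polynomials, so that multiplicities are tracked automatically rather than argued by hand. I would work throughout with row vectors and the left action $v\mapsto v\mathbf{T}$, as in the rest of the Appendix.

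First I would introduce the aggregation map $v\mapsto\mathfrak{v}$ of the Appendix and set $W:=\{v\in\mathbb{C}^n:\ |v^{(s)}|=0\text{ for every }s\}$, a subspace of dimension $n-L$. The computation carried out in the proof of Thm~\ref{thm:main} shows, for \emph{every} $v$ (it uses only $\mathbf{T}_{st}=\alpha_{st}\mathbf{R}_{st}$ with $\mathbf{R}_{st}$ row stochastic, not that $v$ is an eigenvector), that the layer-aggregation of $v\mathbf{T}$ equals $\mathfrak{v}\,\mathfrak{T}$. Thus $W$ is exactly the kernel of the surjective linear map $v\mapsto\mathfrak{v}$, it is $\mathbf{T}$-invariant, and $\mathbf{T}$ induces on the quotient $\mathbb{C}^n/W\cong\mathbb{C}^L$ precisely the operator $\mathfrak{T}$. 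Consequently $\chi_{\mathbf{T}}(x)=\chi_{\mathfrak{T}}(x)\,\chi_{\mathbf{T}|_W}(x)$, and the factor $\chi_{\mathfrak{T}}$ supplies the eigenvalues $\{1,\lambda_2,\dots,\lambda_L\}$ (consistent with Prop.~\ref{prop:rev}).

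Next I would bring in the uniform-columns hypothesis to refine $W=\bigoplus_{s=1}^{L}W_s$, where $W_s$ is the set of vectors supported on the $s$-th block and summing to zero there, so $\dim W_s=n_s-1$. The key claim is that each $W_s$ is itself $\mathbf{T}$-invariant: for $v\in W_s$ the $s$-block of $v\mathbf{T}$ is $v^{(s)}\mathbf{T}_{ss}=\alpha_{ss}v^{(s)}\mathbf{R}_{ss}$, whose entries still sum to zero because $\mathbf{R}_{ss}\mathbf{1}=\mathbf{1}$; and for $k\neq s$ the $k$-block of $v\mathbf{T}$ is $v^{(s)}\mathbf{T}_{sk}$, whose $j$-th entry equals (the common column value of $\mathbf{T}_{sk}$)$\cdot|v^{(s)}|=0$ precisely because $\mathbf{T}_{sk}$ has uniform columns. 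Hence $v\mathbf{T}\in W_s$. This is the unique place where ``uniform columns'' is needed: without it $W$ is still invariant, but it no longer splits layer-by-layer.

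Finally, on $W_s$ the operator $\mathbf{T}$ acts as $\mathbf{T}_{ss}=\alpha_{ss}\mathbf{R}_{ss}$ restricted to the zero-sum hyperplane of $\mathbb{C}^{n_s}$. Since $\mathbf{R}_{ss}$ is an irreducible stochastic matrix, Perron--Frobenius gives that $\lambda_1(\mathbf{T}_s)=\alpha_{ss}$ is simple with a strictly positive (hence nonzero-sum) left eigenvector $\pi_s$, so $\mathbb{C}^{n_s}=\{u:|u|=0\}\oplus\mathrm{span}(\pi_s)$ with both summands $\mathbf{T}_{ss}$-invariant, whence $\chi_{\mathbf{T}|_{W_s}}(x)=\chi_{\mathbf{T}_s}(x)/(x-\lambda_1(\mathbf{T}_s))$. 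Assembling, $\chi_{\mathbf{T}}(x)=\chi_{\mathfrak{T}}(x)\prod_{s=1}^{L}\chi_{\mathbf{T}_s}(x)/(x-\lambda_1(\mathbf{T}_s))$, which is the stated decomposition. The main obstacle, and the reason for phrasing everything via invariant subspaces and characteristic polynomials, is the correct bookkeeping of multiplicities; a secondary point to state explicitly is the standing assumption that each layer's internal chain is irreducible, so that $\lambda_1(\mathbf{T}_s)$ is simple and is removed exactly once.
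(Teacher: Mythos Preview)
Your proof is correct and takes a genuinely different, more structural route than the paper. The paper argues by direct eigenvector extension: given a non-top left eigenvector $u$ of some diagonal block $\mathbf{T}_r$, it pads $u$ with zeros to form $v=(0,\dots,0,u,0,\dots,0)$ and verifies $v\mathbf{T}=\lambda v$, invoking the uniform-columns hypothesis at exactly the same place you do (to annihilate $u\,\mathbf{T}_{rk}$ for $k\neq r$, since each entry of that product is a column value times $|u|=0$); it then simply asserts that this accounts for $n-L$ eigenvalues and calls the remaining $L$ the ``interconnectivity'' ones, without tracking linear independence or multiplicities. Your invariant-subspace/quotient argument via $W=\ker(v\mapsto\mathfrak v)$ and the factorization $\chi_{\mathbf T}(x)=\chi_{\mathfrak T}(x)\prod_{s}\chi_{\mathbf T_s}(x)/(x-\lambda_1(\mathbf T_s))$ handles that bookkeeping automatically and, as a bonus, identifies the $L$ residual eigenvalues with $Spec(\mathfrak T)$ --- something the paper establishes only afterwards in Cor.~\ref{cor:agg}. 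One small simplification: you do not actually need Perron--Frobenius or irreducibility of $\mathbf{R}_{ss}$ to peel off $\lambda_1(\mathbf T_s)$. Since the zero-sum hyperplane $W_s$ is already $\mathbf{T}_s$-invariant and the induced action on the one-dimensional quotient $\mathbb{C}^{n_s}/W_s$ is multiplication by $\alpha_{ss}$ (because $|u\mathbf{T}_s|=\alpha_{ss}|u|$ for every $u$), the factorization $\chi_{\mathbf T_s}(x)=(x-\alpha_{ss})\,\chi_{\mathbf T_s|_{W_s}}(x)$ follows directly; irreducibility is only needed to ensure that $\alpha_{ss}$ is a \emph{simple} eigenvalue, so that the set-minus in the statement unambiguously removes exactly one copy.
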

\begin{proof}
 To prove this statement we just have to show that all eigenvalues (except the largest one) of $\mathbf{T}_s$ for $s\in\{1,\ldots,L\}$ are also eigenvalues of $\mathbf{T}$.
 We assume that $\lambda$ is any eigenvalue corresponding to the eigenvector $u$ of some block matrix $\mathbf{T}_r$, i.e. $\lambda u=u\mathbf{T}_r$.
 We define a row vector $v$ that is zero everywhere except at the position where it corresponds to $\mathbf{T}_r$.
 The vector $v$ looks like $v=(0,\ldots,0,u,0,\ldots,0)$.
 Now we investigate what happens if we multiply this vector with the transition matrix $\mathbf{T}$.
 \begin{equation*}
  v\mathbf{T}=\left(v^{(1)},\ldots,v^{(k)},\ldots,v^{(n)}\right)\mathbf{T}=\left(\sum_l v^{(l)}\mathbf{T}_{l1},\ldots,\sum_l v^{(l)}\mathbf{T}_{lk},\ldots,\sum_l v^{(l)}\mathbf{T}_{ln}\right)\,.
 \end{equation*}
 Let us take a look at the effect of the matrix multiplication on an arbitrary component $v^{(k)}$ with $k\neq r$ and recall that $v^{(k)}$ is equal to a zero vector $\mathbf{0}$ for $k\neq r$.
 \begin{equation*}
  (v\mathbf{T})^{(k)}=\sum_l v^{(l)}\mathbf{T}_{lk}=\sum_{l,l\neq r} \mathbf{0}\mathbf{T}_{lk}+u\mathbf{T}_{rk}=u\mathbf{T}_{rk}\,.
 \end{equation*}
 Note that all eigenvectors $u$ of a transition matrix that are not related to the largest eigenvalue sum up to zero.
 Therefore it holds that $u\mathbf{T}_{rk}=\mathbf{0}$ since $\mathbf{T}_{rk}$ has uniform columns and therefore $u\mathbf{T}_{rk}$ yields in a vector where each entry is equal to some multiple of $|u|$.
 In case of $k=r$ it holds that $v^{(k)}=u$ and we get
 \begin{equation*}
  (v\mathbf{T})^{(r)}=\sum_l v^{(l)}\mathbf{T}_{lr}=\sum_{l,l\neq r} \mathbf{0}\mathbf{T}_{lk}+u\mathbf{T}_{rr}=u\mathbf{T}_{r}=\lambda r\,.
 \end{equation*}
 Hence, it holds that $v\mathbf{T}=\lambda v$, which means that $\lambda$ is also an eigenvalue of $\mathbf{T}$.
 This way we get $n-L$ eigenvalues of $\mathbf{T}$ apart from the largest eigenvalue that is equal to one.
 The remaining eigenvalues denoted by $\lambda_2,\ldots,\lambda_L$ are not attributed to any block matrix of $\mathbf{T}$.
 Therefore they are considered to be the interconnectivity eigenvalues.
\end{proof}

\begin{cor}\label{cor:agg}
Let $\mathbf{G}$ be a multi-layer network consisting of $L$ layers that satisfies Eq.(\ref{eqn:cond}) and the conditions of Prop~\ref{prop:spec}.
Then the aggregated matrix $\mathfrak{T}=\{\alpha_{st}\}_{st}$ has spectrum
\begin{equation*}
 Spec(\mathfrak{T})=\{1,\lambda_2,\ldots,\lambda_L\}\, ,
\end{equation*}
and it holds that $\lambda_2,\ldots,\lambda_L\in Spec(\mathbf{T})$.
\end{cor}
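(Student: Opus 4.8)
The plan is to combine the two one-way inclusions already established — Prop.~\ref{prop:spec}, which exhibits the $L$ ``interconnectivity'' eigenvalues $1,\lambda_2,\ldots,\lambda_L$ of $\mathbf{T}$ alongside the $n-L$ block eigenvalues, and Prop.~\ref{prop:rev}, which lifts every eigenvalue of $\mathfrak{T}$ to an eigenvalue of $\mathbf{T}$ — and then to pin down, by a small dimension count on eigenvectors, that the $L$ eigenvalues produced by the lift are exactly the interconnectivity ones.

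First I would record that $\mathfrak{T}=\{\alpha_{st}\}_{st}$ is an $L\times L$ matrix which is row stochastic, since Eq.(\ref{eqn:cond}) is normalized so that $\sum_t\alpha_{st}=1$; hence it has exactly $L$ eigenvalues counted with multiplicity, one of which is $1$ (with right eigenvector $(1,\ldots,1)^\top$), so that $Spec(\mathfrak{T})=\{1,\mu_2,\ldots,\mu_L\}$ for suitable $\mu_i$. Next, by Prop.~\ref{prop:rev} each $\mu_i$ lies in $Spec(\mathbf{T})$; more importantly, the proof of that proposition constructs the associated right eigenvector $w$ of $\mathbf{T}$ to be \emph{constant on each layer block}, $w^{(k)}=(\mathfrak{w}_k,\ldots,\mathfrak{w}_k)^\top$. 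The replication map $\mathfrak{w}\mapsto w$ is linear and injective, so a basis of $\mathbb{R}^L$ maps to $L$ linearly independent, layer-constant eigenvectors of $\mathbf{T}$, distributed among the values $1,\mu_2,\ldots,\mu_L$.

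Then I would invoke Prop.~\ref{prop:spec}: its proof realizes each of the $n-L$ block eigenvalues by an eigenvector supported on a single layer block whose entries sum to zero. These span an $(n-L)$-dimensional subspace $W$ of vectors with vanishing layer-aggregation on every block, so $W$ meets the span of the layer-constant eigenvectors only in $0$. Consequently the $L$ layer-constant eigenvectors coming from $\mathfrak{T}$ contribute precisely to the complementary $L$-dimensional part of the spectrum of $\mathbf{T}$, i.e. to the interconnectivity eigenvalues $\{1,\lambda_2,\ldots,\lambda_L\}$; matching multiplicities on both $L$-dimensional pieces yields $\{1,\mu_2,\ldots,\mu_L\}=\{1,\lambda_2,\ldots,\lambda_L\}$ as multisets, which is the first assertion. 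The inclusion $\lambda_2,\ldots,\lambda_L\in Spec(\mathbf{T})$ is then immediate — indeed it is already part of Prop.~\ref{prop:spec} — and is also consistent with Thm.~\ref{thm:main}, whose hypothesis $|\mathfrak{v}|\neq 0$ is exactly the statement that these interconnectivity eigenvectors are not the block ones.

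The step I expect to be the main obstacle is the bookkeeping in the last paragraph: a priori some $\mu_i$ could coincide numerically with a block eigenvalue $\lambda_j(\mathbf{T}_s)$ or with $\lambda_1(\mathbf{T}_s)$, so identifying the two eigenvalue lists cannot be done by set membership alone and genuinely needs the linear-independence / direct-sum argument between the ``layer-constant'' vectors and the ``block-supported, zero-sum'' vectors (and, if one wants full rigour in non-diagonalizable cases, the analogous statement for generalized eigenspaces). Everything else is a direct assembly of the facts proved above.
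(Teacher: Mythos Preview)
Your argument is correct and assembles the same two ingredients as the paper --- Prop.~\ref{prop:rev} to lift $Spec(\mathfrak{T})$ into $Spec(\mathbf{T})$, and Prop.~\ref{prop:spec} to account for the $n-L$ block eigenvalues --- but the linking step is different and in fact more careful. The paper observes that each block eigenvector $v=(0,\ldots,0,u,0,\ldots,0)$ has vanishing layer-aggregation and then appeals to Thm.~\ref{thm:main} to conclude that the corresponding eigenvalue is \emph{not} in $Spec(\mathfrak{T})$; but Thm.~\ref{thm:main} is the one-way implication ``$\mathfrak{v}\neq 0\Rightarrow\lambda\in Spec(\mathfrak{T})$'', so this uses its converse and does not by itself exclude a numerical coincidence between a block eigenvalue and some $\mu_i$ --- exactly the obstacle you flag. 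Your direct-sum decomposition into the $L$-dimensional layer-constant subspace and the $(n-L)$-dimensional block-supported, zero-sum subspace sidesteps this and matches multiplicities cleanly; the diagonalizability caveat you mention is harmless here, since in the paper's undirected setting $\mathbf{T}$ is similar to a symmetric matrix.
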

\begin{proof}
 Note that every eigenvalue $\lambda\neq 1$ of some block matrix $\mathbf{T}_r$ with $\lambda u=u\mathbf{T}_r$ is by Prop~\ref{prop:spec} also an eigenvalue of $\mathbf{T}$.
 Furthermore, $\lambda$ is attributed to the eigenvector $v=(0,\ldots,0,u,0,\ldots,0)$ of $\mathbf{T}$.
 However $|v|=0$ since $u$ is an eigenvector of a transition matrix, not related to the largest eigenvalue, and therefore sums up to zero.
 Hence all eigenvalues fulfilling this condition are by Thm~\ref{thm:main} not eigenvalues of $\mathfrak{T}$.
 Since $\mathfrak{T}$ contains at least $L$ eigenvalues that by Prop~\ref{prop:rev} also correspond to eigenvalues of $\mathbf{T}$,
 the remaining eigenvalues $\lambda_2,\ldots,\lambda_L$ have to also be eigenvalues of $\mathfrak{T}$.
\end{proof}

In the following we provide a useful proposition for the eigenvalues arising from the inter-links in case of two layers.
Note that by the function $T(\cdot)$ applied to matrix $\mathbf{M}$ we indicate that $T(\mathbf{M})$ is the row-normalization of $\mathbf{M}$.
\begin{prop}\label{prop:bip}
Let $\mathbf{G}$ be a multi-layer network that satisfies Eq.(\ref{eqn:cond}), consisting of two networks $G_1$ and $G_2$ in separate layers.
Assume that the supra-transition matrix $\mathbf{T}$ has the form
\begin{equation*}
 \mathbf{T}=
  \left(
    \begin{array}{c|c}
      \mathbf{T}_{1} & \mathbf{T}_{12}\\
      \hline
      \mathbf{T}_{21} & \mathbf{T}_{2}
    \end{array}
   \right)=
   \left(
    \begin{array}{c|c}
      \beta T(\mathbf{A}_1) & (1-\beta)\mathbf{T}^I_{12}\\
      \hline
      (1-\beta)\mathbf{T}^I_{21} & \beta T(\mathbf{A}_2)
    \end{array}
   \right)
   \, ,
\end{equation*}
where $\mathbf{T}^I$ is the transition matrix of the layer $\mathbf{G}$ that only consists of the inter-layer links and $\beta\in\mathbb{Q}$ is a constant.
Furthermore, assume that $\mathbf{T}_1$ and $\mathbf{T}_2$ have uniform columns.
\\
Then for $\lambda\in Spec(\mathbf{T}^I)$ and $\lambda\neq 1,-1$ it holds that $(1-\beta)\lambda\in Spec(\mathbf{T})$.
\end{prop}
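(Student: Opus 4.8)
The plan is to produce, for every eigenvalue $\lambda$ of $\mathbf{T}^I$ with $\lambda\neq\pm1$, an \emph{explicit} left eigenvector of $\mathbf{T}$ with eigenvalue $(1-\beta)\lambda$, obtained directly from the eigenvector of $\mathbf{T}^I$. First I would fix a left eigenvector $u$ with $u\mathbf{T}^I=\lambda u$ and split it into its layer blocks $u=(u^{(1)},u^{(2)})$, with $u^{(k)}$ of length $n_k$. Since $\mathbf{T}^I$ is the transition matrix of the graph carrying only the inter-layer links, it is block-anti-diagonal, with off-diagonal blocks $\mathbf{R}_{12}:=\mathbf{T}^I_{12}$ and $\mathbf{R}_{21}:=\mathbf{T}^I_{21}$ that are row stochastic (each node has all of its inter-link weight in the other layer). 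The eigenvalue equation then decouples into $u^{(2)}\mathbf{R}_{21}=\lambda u^{(1)}$ and $u^{(1)}\mathbf{R}_{12}=\lambda u^{(2)}$.

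The second step is to show that the layer sums of $u$ vanish. Because a row-stochastic matrix preserves the sum of entries of a row vector, applying $|\cdot|$ to the two relations above gives $|u^{(2)}|=\lambda|u^{(1)}|$ and $|u^{(1)}|=\lambda|u^{(2)}|$, hence $(1-\lambda^2)|u^{(1)}|=0$; since $\lambda\neq\pm1$ this forces $|u^{(1)}|=|u^{(2)}|=0$. (This is essentially Thm~\ref{thm:main} applied to $\mathbf{T}^I$, whose aggregation is the $2\times2$ anti-diagonal stochastic matrix with spectrum $\{1,-1\}$: an eigenvalue outside $\{1,-1\}$ must have a layer-aggregated eigenvector of zero total, and the two relations upgrade this to each block summing to zero.)

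The final step invokes the hypothesis that $\mathbf{T}_1$ and $\mathbf{T}_2$ have uniform columns. If a matrix $\mathbf{M}$ has uniform columns, then $[w\mathbf{M}]_j$ equals the constant value of column $j$ of $\mathbf{M}$ times $|w|$; taking $w=u^{(1)}$, $\mathbf{M}=\mathbf{T}_1$, and using $|u^{(1)}|=0$ yields $u^{(1)}\mathbf{T}_1=\mathbf{0}$, and similarly $u^{(2)}\mathbf{T}_2=\mathbf{0}$. Now set $v:=u=(u^{(1)},u^{(2)})$ and multiply block-wise by $\mathbf{T}$: the diagonal contributions drop out, and using $\mathbf{T}_{12}=(1-\beta)\mathbf{R}_{12}$, $\mathbf{T}_{21}=(1-\beta)\mathbf{R}_{21}$ together with the two relations from step one I obtain $v\mathbf{T}=\big((1-\beta)u^{(2)}\mathbf{R}_{21},\,(1-\beta)u^{(1)}\mathbf{R}_{12}\big)=(1-\beta)\lambda\,v$. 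Since $u\neq\mathbf{0}$, the scalar $(1-\beta)\lambda$ is an eigenvalue of $\mathbf{T}$, which is the claim.

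I expect the only genuinely delicate points to be the exclusion $\lambda\neq\pm1$ and the well-definedness of $\mathbf{T}^I$. The exclusion is exactly what is needed in the second step to kill the layer sums of $u$; without it $u^{(1)}\mathbf{T}_1$ and $u^{(2)}\mathbf{T}_2$ would not vanish, the diagonal blocks would contribute, and the candidate vector $v=u$ would no longer be an eigenvector of $\mathbf{T}$. The well-definedness requires that every node carry at least one inter-layer link, so that $\mathbf{R}_{12}$ and $\mathbf{R}_{21}$ really are row stochastic; this is implicit in the statement and should be noted. Everything else is routine block bookkeeping of the same type already used in Thm~\ref{thm:main} and Prop~\ref{prop:spec}.
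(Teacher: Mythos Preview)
Your proposal is correct and follows essentially the same route as the paper: take a left eigenvector $v=(v^{(1)},v^{(2)})$ of $\mathbf{T}^I$, show each block sums to zero so that the uniform-column diagonal blocks annihilate it, and read off $v\mathbf{T}=(1-\beta)\lambda v$. The only cosmetic difference is in how the vanishing of $|v^{(1)}|$ and $|v^{(2)}|$ is argued: the paper observes that $v^{(1)}$ is a left eigenvector of the row-stochastic matrix $\mathbf{T}^I_{12}\mathbf{T}^I_{21}$ with eigenvalue $\lambda^2<1$ and then invokes the general fact that non-principal left eigenvectors of a stochastic matrix sum to zero, whereas you derive $|u^{(2)}|=\lambda|u^{(1)}|$ and $|u^{(1)}|=\lambda|u^{(2)}|$ directly from row-stochasticity of $\mathbf{R}_{12},\mathbf{R}_{21}$ and solve. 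Your version is slightly more self-contained; both use $\lambda\neq\pm1$ in exactly the same place.
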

\begin{proof}
 If $v$ is an eigenvector to the eigenvalue $\lambda\neq 1,-1$ of $\mathbf{T}^I$ it holds that $v\mathbf{T}^I=\lambda v$.
 Hence,
 \begin{equation*}
  v\mathbf{T}^I=\left(v^{(1)},v^{(2)}\right)\mathbf{T}^I=\left(v^{(2)}\mathbf{T}^I_{21},v^{(1)}\mathbf{T}^I_{12},\right)=\lambda\left(v^{(1)},v^{(2)}\right)\,.
 \end{equation*}
 Because $\lambda v^{(2)}=v^{(1)}\mathbf{T}^I_{12}$, we get $\lambda^2 v^{(1)}=v^{(1)} \mathbf{T}^I_{12}\mathbf{T}^I_{21}$.
 Therefore, $v^{(1)}$ is also an eigenvector of the transition matrix $\mathbf{T}^I_{12}\mathbf{T}^I_{21}$ to the eigenvalue $\lambda^2$.
 Note that $\lambda\neq 1,-1$ hence $\lambda^2<1$ which implies that $v^{(1)}$ does not correspond to the largest eigenvalue and therefore its entries sum up to zero.
 The same holds for $v^{(2)}$ and the matrix $\mathbf{T}^I_{21}\mathbf{T}^I_{12}$.
 For the multiplication of $v$ with $\mathbf{T}$ we deduce that
  \begin{equation*}
  v\mathbf{T}=\left(v^{(1)},v^{(2)}\right)\mathbf{T}=\left(v^{(1)}\mathbf{T}_1+(1-\beta)v^{(2)}\mathbf{T}^I_{21},(1-\beta)v^{(1)}\mathbf{T}^I_{12}+v^{(2)}\mathbf{T}_2,\right)\,.
 \end{equation*}
 Since $\mathbf{T}_1$ and $\mathbf{T}_2$ have uniform columns we get $v^{(1)}\mathbf{T}_1=\mathbf{0}$ and $v^{(2)}\mathbf{T}_2=\mathbf{0}$.
 And therefore $v\mathbf{T}=(1-\beta)\lambda v$ and $(1-\beta)\lambda\in Spec(\mathbf{T})$.
\end{proof}
Proposition~\ref{prop:bip} can be extended to multiple layers, however the proof is more involved and will be omitted.

\end{document}